\date{}
\newtheorem{df}{Definition}[section]
\newtheorem{thm}{Theorem}[section]
\newtheorem{cor}{Corollary}[section]
\newtheorem{lem}{Lemma}[section]
\newtheorem{exam}{Example}[section]
\renewcommand\section{\@startsection {section}{1}{\z@}
{-30pt \@plus -1ex \@minus -.2ex} {2.3ex \@plus.2ex}
{\normalfont\normalsize\bfseries}}
\renewcommand\subsection{\@startsection{subsection}{2}{\z@}
{-3.25ex\@plus -1ex \@minus -.2ex} {1.5ex \@plus .2ex}
{\normalfont\normalsize\bfseries}}
\renewcommand{\@seccntformat}[1]{\csname the#1\endcsname. }
\title{\bf{Skew constacyclic codes over $F_{q}+uF_{q}+vF_{q}$}}
\author{ \bf Habibul Islam and Om Prakash\\\\
Department of Mathematics \\
Indian Institute of Technology Patna\\ Patna- 801 106, India \\
E-mail: habibul.pma17@iitp.ac.in and om@iitp.ac.in}
\begin{document}

\maketitle

\begin{abstract} 

In this paper skew constacyclic codes over finite non-chain ring $\mathcal{R}=F_{p^{m}}+uF_{p^{m}}+vF_{p^{m}}$, where $p$ is an odd prime and $u^{2}=u, v^{2}=v, uv=vu=0$ are studied. We show that Gray image of a skew $\alpha$-constacyclic cyclic code of length $n$ over $\mathcal{R}$ is a skew $\alpha$-quasi-cyclic code of length $3n$ over $F_{q}$ of index 3. It is also shown that skew $\alpha$-constacyclic codes are either equivalent to $\alpha$-constacyclic codes or $\alpha$-quasi-twisted codes over $\mathcal{R}$. Further, the structural properties of skew constacyclic over $\mathcal{R}$ are obtained by decomposition method.

\end{abstract}

\noindent {\it Key Words} : Skew polynomial ring; Skew constacyclic code; $\alpha$-quasi-twisted code; Gray map; Idempotent generator; Chain ring.\\

\noindent {\it 2010 MSC} : 94B15, 94B05, 94B60.
\section{Introduction}

Codes over finite rings have been received remarkable attention for last six decades. One of the most important class of linear code is cyclic code and in more general constacyclic code. It plays crucial role in the theory of error-correcting codes and can be implemented easily by using shift registers and consequently, it is the most used codes by engineers. Earlier, linear codes were studied over finite commutative rings by researchers for a long time. But later, in 2007, Boucher et al. \cite{D07} generalized the concept of cyclic code over a non-commutative ring namely skew polynomial ring $F[x;\theta]$; where $F$ is a finite field and $\theta$ is an automorphism on the field $F$. Boucher et al. \cite{D07} and Boucher \& Ulmer \cite{D09} studied skew cyclic codes by assuming that elements of the field are not commuting  with the indeterminate $x$ and constructed some linear codes better than previously known best codes. In fact, they have considered skew cyclic codes by taking an automorphism whose order is a factor of the length of the code. By relaxing the above said restriction, Siap et al. \cite{siap11} studied the structural properties of skew cyclic codes of arbitrary length and found that a skew cyclic code is equivalent to either a cyclic code or a quasi-cyclic code over $F_{q}+vF_{q}$ where $v^{2}=v.$ In 2008, Boucher et al. \cite{D08} extended the concept of skew polynomial ring over Galois ring and studied the skew constacyclic code over Galois ring. One of the challenging task to deal with Galois ring is that the polynomial ring may not be left and right Euclidean and that insist to consider those kind of ideals which are generated by monic polynomials.
In 2012, Jitman et al. \cite{jitman} introduced skew constacyclic codes over the skew polynomial ring whose coefficients are taken from finite chain rings, particularly over the ring $F_{p^{m}} + uF_{p^{m}}$ where $u^{2} = 0.$ Very recently, in 2017, Gao et al. \cite{gao} studied skew constacyclic codes over $F_{q}+vF_{q}$ with $v^{2}=v$. Motivated by these works, in this article, we discuss skew constacyclic code over $F_{q}+uF_{q}+vF_{q}$ where $q=p^{m}$, $p$ an odd prime and $u^{2}=u, v^{2}=v, uv=vu=0.$\\

Throughout this work, $\mathcal{R}$ represents $F_{q}+uF_{q}+vF_{q}$ where $q=p^{m}$, $p$ an odd prime and $u^{2}=u, v^{2}=v, uv=vu=0.$ The ring $\mathcal{R}$ is said to be a principal ideal ring if every ideal of $\mathcal{R}$ is generated by one element. $\mathcal{R}$ is said to be a local ring if it has unique maximal ideal and chain ring if its ideals forms a chain under set inclusion.
Note that $\mathcal{R}$ is a principal ideal ring with characteristic $p$ and cardinality $q^{3}$. Also $\mathcal{R}$ is a  finite commutative semi-local and non-chain ring having three maximal ideals $\langle 1-u \rangle, \langle 1-v \rangle$ and $\langle u+v \rangle.$ \\

We consider the automorphism $\theta_{t}:F_{q}\rightarrow F_{q}$ defined by $\theta_{t}(a)=a^{p^{t}}$ and this can be extended to $\mathcal{R}$ as $\theta_{t}(a+ub+vc)=a^{p^{t}}+ub^{p^{t}}+vc^{p^{t}}$ for $a, b, c\in F_{q}.$ In this case, the order of the automorphism is $\mid \theta_{t} \mid =\frac{m}{t}=k$ (say). The invariant subring under the automorphism $\theta_{t}$ is $F_{p^{t}}+uF_{p^{t}}+vF_{p^{t}}.$ For the automorphism $\theta_{t}$ on $\mathcal{R}$ the set $\mathcal{R}[x;\theta_{t}] = \big \{ a_{0}+a_{1}x+\dots +a_{n-1}x^{n-1} \mid a_{i}\in \mathcal{R}$ for $i=1,2,3,...,n\big \}$ forms a ring under the usual addition of polynomials and the multiplication of polynomials, denoted by $\ast$, defined with respect to $ax^{i}\ast bx^{j} = a\theta_{t}^{i}(b)x^{i+j}.$ This is a non-commutative ring unless $\theta_{t}$ is identity and known as skew polynomial ring. We recall that a linear code of length $n$ is an $\mathcal{R}$-submodule of $\mathcal{R}^{n}$ and members of the code are called codewords.

\begin{df}
Let $\alpha =\alpha_{1}+u\alpha_{2}+v\alpha_{3}$ be a unit in $\mathcal{R}$ where $\alpha_{i}\in F_{p^{t}}\setminus \big \{0\big \}$ for $i=1,2,3$. A linear code $\mathcal{C}$ of length $n$ over $\mathcal{R}$ is said to be a skew $\alpha$-constacyclic code with respect to the automorphism $\theta_{t}$ if and only if $\mathcal{C}$ is invariant under the skew $\alpha$-constacyclic shift operator $\tau_{\alpha}:\mathcal{R}^{n}\rightarrow \mathcal{R}^{n}$, defined by $\tau_{\alpha}(c_{1},c_{2},...,c_{n})=(\alpha\theta_{t}(c_{n}),\theta_{t}(c_{1}),..,\theta_{t}(c_{n-1}))$, $i.e, \mathcal{C}$ is a skew $\alpha$-constacyclic code if and only if $\tau_{\alpha}(\mathcal{C})=\mathcal{C}.$
This code becomes skew cyclic code when $\alpha=1$ and skew negacyclic code when $\alpha=-1$.
\end{df}
\begin{df}
A code $\mathcal{C}$ of length $mn$ over $F_{q}$ is said to be a skew $\alpha$-quasi-cyclic code of index $m$ if ${\pi}_{m}(\mathcal{C})=\mathcal{C}$, where ${\pi}_{m}$ is the skew $\alpha$-quasi-cyclic shift on $(F_{q}^{n})^{m}$ define by
${\pi}_{m}(a^{1}\mid a^{2}\mid...\mid a^{m})=(\tau_{\alpha}(a^{1})\mid \tau_{\alpha}(a^{2})\mid...\mid \tau_{\alpha}(a^{m}))$.
\end{df}

The results of this paper have been organized as follows: In section 2, we define Gray map and describe the Gray image of skew constacyclic code over $\mathcal{R}$. Section 3 gives some important results on linear codes while section 4 discusses skew constacyclic over $\mathcal{R}$. In section 5, we study some structural properties of skew constacyclic codes over $\mathcal{R}$ and finally present two examples to support the derived results. Section 6 concludes the paper.

\section{Gray map and Gray image of skew constacyclic code over $\mathcal{R}$}

Recall that for any $a\in \mathcal{R}^{n}$ the Hamming weight $w_{H}(a)$ defined by the number of non-zero components in $a$. The Hamming distance between $a$ and $b$ in $\mathcal{R}^{n}$ defined by $d_{H}(a,b)=w_{H}(a-b)$ and Hamming distance for a code $\mathcal{C}$ is defined by $d_{H}(\mathcal{C})= min\big \{d_{H}(a,b)\mid a\neq b, \forall a,b\in \mathcal{C}\big \}$. The Lee weight of any element $r=(r_{1},r_{2},\dots, r_{n})\in \mathcal{R}^{n}$ is define by $w_{L}(r)=\sum_{i=1}^{n}w_{L}(r_{i})$ where $w_{L}(r_{i})=w_{H}(a_{i},a_{i}+b_{i},a_{i}+c_{i})$ for $r_{i}=a_{i}+ub_{i}+vc_{i}\in \mathcal{R}, i=1,2,\dots, n.$ The Lee distance for the code $\mathcal{C}$ is defined by $d_{L}(\mathcal{C})=min\big \{d_{L}(a,b)\mid a\neq b, \forall a,b\in \mathcal{C}\big \}$ where $d_{L}(a,b)$ is Lee distance between $a$ and $b$ with $d_{L}(a,b)=w_{L}(a-b).$\\

We define the Gray map $\Phi:\mathcal{R}\rightarrow F^{3}_{q}$ by $\Phi(a+ub+vc)=(a,a+b,a+c)$ where $a,b,c\in F_{q}$.
It is checked that this map is a linear and can be extended in natural way to $\mathcal{R}^{n}$ as $\Phi :\mathcal{R}^{n}\rightarrow F^{3n}_{q}$ define by $\Phi(r_{1},r_{2},\dots, r_{n})=(a_{1},a_{2},\dots, a_{n},a_{1}+b_{1},\dots, a_{n}+b_{n},a_{1}+c_{1},\dots, a_{n}+c_{n})$ where $r_{i}=a_{i}+ub_{i}+vc_{i}\in \mathcal{R}$ for $i=1, 2,\dots, n$.\\
By definition of Gray map, the following theorem can be easily proved:

\begin{thm}
The Gray map $\Phi$ is $F_{q}$-linear isometry (or distance preserving) map from $\mathcal{R}^{n}$(Lee distance) to $F^{3n}_{q}$(Hamming distance).
\end{thm}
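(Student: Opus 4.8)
The plan is to verify the two defining properties of a Gray map isometry, namely that $\Phi$ is $F_q$-linear and that it preserves distance when $\mathcal{R}^n$ carries the Lee metric and $F_q^{3n}$ carries the Hamming metric. Since linearity of $\Phi$ over $F_q$ was already observed when the map was introduced (the assignment $a+ub+vc \mapsto (a,a+b,a+c)$ is plainly additive and commutes with scalar multiplication by elements of $F_q$, and this extends coordinatewise to $\mathcal{R}^n$), the substantive content is the isometry claim.

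First I would reduce the distance-preserving statement to a weight-preserving statement. Because $\Phi$ is $F_q$-linear, for any $x,y \in \mathcal{R}^n$ we have $\Phi(x)-\Phi(y)=\Phi(x-y)$, so that $d_H(\Phi(x),\Phi(y))=w_H(\Phi(x-y))$ while $d_L(x,y)=w_L(x-y)$. Hence it suffices to show $w_H(\Phi(r))=w_L(r)$ for every $r\in\mathcal{R}^n$. Moreover, since both the Lee weight on $\mathcal{R}^n$ and the Hamming weight on $F_q^{3n}$ decompose as sums over the $n$ coordinate blocks, and $\Phi$ acts blockwise, it is enough to establish the single-coordinate identity $w_L(r_i)=w_H(\Phi(r_i))$ for an arbitrary $r_i=a_i+ub_i+vc_i\in\mathcal{R}$.

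Next I would verify this one-element identity directly from the definitions. By definition $w_L(r_i)=w_H(a_i,\,a_i+b_i,\,a_i+c_i)$, and by the definition of the Gray map $\Phi(r_i)=(a_i,\,a_i+b_i,\,a_i+c_i)$, so the two triples are literally identical and therefore have the same Hamming weight. Summing over $i=1,\dots,n$ yields $w_L(r)=w_H(\Phi(r))$ for all $r\in\mathcal{R}^n$, which combined with the linearity reduction gives $d_L(x,y)=d_H(\Phi(x),\Phi(y))$ as required.

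I do not anticipate a genuine obstacle here, since the Lee weight on $\mathcal{R}$ was \emph{defined} precisely as the Hamming weight of the image triple, making the isometry almost tautological once the problem is reduced to a single coordinate. The only care needed is to state the linearity reduction cleanly (so that distances become weights of differences) and to confirm that $\Phi$ is a bijection between $\mathcal{R}^n$ and $F_q^{3n}$—which follows from comparing cardinalities $|\mathcal{R}^n|=q^{3n}=|F_q^{3n}|$ together with injectivity, the latter being immediate because $(a,a+b,a+c)$ determines $a$, hence $b$ and $c$. Establishing bijectivity justifies calling $\Phi$ a genuine isometry of metric spaces rather than merely a distance-nonincreasing map.
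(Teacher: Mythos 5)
Your proof is correct and follows exactly the route the paper intends: the paper gives no explicit argument, remarking only that the theorem follows "by definition of Gray map," and your reduction to the single-coordinate identity $w_L(r_i)=w_H(\Phi(r_i))$ makes precise why that is so, since the Lee weight is defined as the Hamming weight of the very triple $(a_i,a_i+b_i,a_i+c_i)$ that $\Phi$ produces. Your additional remarks on linearity and bijectivity are accurate and complete the statement as claimed.
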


\begin{thm}
If $\mathcal{C}$ is $[n,k,d_{L}]$ linear code over $\mathcal{R}$, then $\Phi(\mathcal{C})$ is $[3n,k,d_{L}]$ linear code over $F_{q}.$
\end{thm}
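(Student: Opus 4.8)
The plan is to reduce every assertion to the single fact established in Theorem 2.1, namely that $\Phi$ is an $F_q$-linear isometry from $(\mathcal{R}^{n}, d_{L})$ to $(F_{q}^{3n}, d_{H})$. There are four things to verify for the image $\Phi(\mathcal{C})$: that its length is $3n$, that it is $F_q$-linear, that its dimension is $k$, and that its minimum Hamming distance equals $d_{L}$. The length is immediate, since by definition $\Phi$ maps $\mathcal{R}^{n}$ into $F_{q}^{3n}$, so every codeword of $\Phi(\mathcal{C})$ is a vector of length $3n$ over $F_q$.

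For linearity, I would observe that $\mathcal{C}$, being an $\mathcal{R}$-submodule of $\mathcal{R}^{n}$, is in particular closed under addition and under multiplication by the scalars of $F_{q}\subseteq \mathcal{R}$; hence $\mathcal{C}$ is an $F_q$-subspace of $\mathcal{R}^{n}$. Since $\Phi$ is $F_q$-linear, for $x,y\in \mathcal{C}$ and $\lambda,\mu\in F_{q}$ we have $\lambda\Phi(x)+\mu\Phi(y)=\Phi(\lambda x+\mu y)$ with $\lambda x+\mu y\in \mathcal{C}$, so $\Phi(\mathcal{C})$ is closed under $F_q$-linear combinations and is therefore a linear code over $F_q$.

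The dimension and the distance both follow from the isometry property. Because $\Phi$ preserves distances and fixes the zero vector, it is injective, whence $|\Phi(\mathcal{C})|=|\mathcal{C}|$. Reading the parameter $k$ of $\mathcal{C}$ as $k=\log_{q}|\mathcal{C}|$, this gives $|\Phi(\mathcal{C})|=q^{k}$, that is, $\dim_{F_{q}}\Phi(\mathcal{C})=k$. For the minimum distance, distinct codewords $a\neq b$ of $\mathcal{C}$ map to distinct images, and
\[
d_{H}(\Phi(a),\Phi(b))=w_{H}(\Phi(a)-\Phi(b))=w_{H}(\Phi(a-b))=w_{L}(a-b)=d_{L}(a,b),
\]
using the $F_q$-linearity of $\Phi$ together with the isometry. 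Taking the minimum over all pairs $a\neq b$ yields $d_{H}(\Phi(\mathcal{C}))=d_{L}(\mathcal{C})=d_{L}$.

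Since essentially everything is a direct consequence of Theorem 2.1, I do not expect a genuine obstacle. The one point that needs care is the meaning of the parameter $k$: it must be interpreted as $\log_{q}|\mathcal{C}|$, and for this to be an integer one uses that $\mathcal{R}$ decomposes, through the orthogonal idempotents $u$, $v$ and $1-u-v$, as a product of three copies of $F_{q}$, so that every linear code over $\mathcal{R}$ has cardinality a power of $q$. Once this is granted, the injectivity of $\Phi$ transfers the count to $\Phi(\mathcal{C})$ and the statement follows.
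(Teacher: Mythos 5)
Your proof is correct and follows essentially the same route as the paper: both arguments reduce everything to Theorem 2.1, using the $F_q$-linearity of $\Phi$ to get linearity of the image and the distance-preserving bijectivity to transfer the dimension and minimum distance. Your added care about interpreting $k$ as $\log_q|\mathcal{C}|$ (via the idempotent decomposition of $\mathcal{R}$) is a useful clarification the paper leaves implicit, but it does not change the argument.
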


\begin{proof}
By Theorem 2.1, $\Phi$ is a linear map, so $\Phi(\mathcal{C})$ is a linear code over $F_{q}.$ Since $\Phi$ is distance preserving as well as bijection, therefore $\Phi(\mathcal{C})$ has same minimum distance and dimension as $\mathcal{C}.$
\end{proof}

\begin{lem}
Let $\Phi$ be the Gray map from $\mathcal{R}^{n}$ to $F^{3n}_{q}$, $\tau_{\alpha}$ be the skew $\alpha$-constacyclic shift and $\pi_{3}$ be the skew $\alpha$-quasi-cyclic shift operator as defined above. Then $\Phi\tau_{\alpha}=\pi_{3}\Phi.$
\end{lem}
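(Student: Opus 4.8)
The plan is to verify the operator identity $\Phi\tau_{\alpha}=\pi_{3}\Phi$ pointwise, by evaluating both sides on an arbitrary vector $r=(r_{1},r_{2},\dots,r_{n})\in\mathcal{R}^{n}$ with $r_{i}=a_{i}+ub_{i}+vc_{i}$, and then comparing the two resulting elements of $F_{q}^{3n}$ block by block. Throughout I would abbreviate $a_{i}'=\theta_{t}(a_{i})=a_{i}^{p^{t}}$, and likewise $b_{i}',c_{i}'$, using repeatedly that $\theta_{t}$ is an additive field automorphism, so that $\theta_{t}(a_{i}+b_{i})=a_{i}'+b_{i}'$ and similarly for $a_{i}+c_{i}$.

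First I would compute the left-hand side. Applying the shift gives $\tau_{\alpha}(r)=(\alpha\theta_{t}(r_{n}),\theta_{t}(r_{1}),\dots,\theta_{t}(r_{n-1}))$, and the only entry that is not a plain application of $\theta_{t}$ is the leading one, $\alpha\theta_{t}(r_{n})$. Expanding $\alpha\theta_{t}(r_{n})=(\alpha_{1}+u\alpha_{2}+v\alpha_{3})(a_{n}'+ub_{n}'+vc_{n}')$ and reducing with $u^{2}=u$, $v^{2}=v$, $uv=vu=0$ produces an element $A_{0}+uB_{0}+vC_{0}$ whose Gray image $(A_{0},A_{0}+B_{0},A_{0}+C_{0})$ collapses to $(\alpha_{1}a_{n}',\,(\alpha_{1}+\alpha_{2})(a_{n}'+b_{n}'),\,(\alpha_{1}+\alpha_{3})(a_{n}'+c_{n}'))$; this is the one genuine computation. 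Applying $\Phi$ to all of $\tau_{\alpha}(r)$ and regrouping into the three length-$n$ blocks prescribed by the definition of $\Phi$ then yields
\begin{align*}
\Phi\tau_{\alpha}(r)=\big(\ & \alpha_{1}a_{n}',\,a_{1}',\dots,a_{n-1}' \ \mid\ (\alpha_{1}+\alpha_{2})(a_{n}'+b_{n}'),\,a_{1}'+b_{1}',\dots,a_{n-1}'+b_{n-1}' \\
& \mid\ (\alpha_{1}+\alpha_{3})(a_{n}'+c_{n}'),\,a_{1}'+c_{1}',\dots,a_{n-1}'+c_{n-1}'\ \big).
\end{align*}

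Next I would compute the right-hand side. Here $\Phi(r)=(a^{1}\mid a^{2}\mid a^{3})$ with $a^{1}=(a_{1},\dots,a_{n})$, $a^{2}=(a_{1}+b_{1},\dots,a_{n}+b_{n})$ and $a^{3}=(a_{1}+c_{1},\dots,a_{n}+c_{n})$, and $\pi_{3}$ applies the constacyclic shift to each block separately. The point I would make explicit is that, since $\Phi$ is a ring isomorphism $\mathcal{R}\cong F_{q}^{3}$, the unit $\alpha$ corresponds to the triple $(\alpha_{1},\alpha_{1}+\alpha_{2},\alpha_{1}+\alpha_{3})$, so that the shift acting on the $j$-th block scales its wrapped entry by the $j$-th coordinate of $\Phi(\alpha)$. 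Carrying out $\tau_{\alpha}$ on $a^{1},a^{2},a^{3}$ with scalars $\alpha_{1}$, $\alpha_{1}+\alpha_{2}$, $\alpha_{1}+\alpha_{3}$ reproduces exactly the three blocks displayed above, giving $\pi_{3}\Phi(r)=\Phi\tau_{\alpha}(r)$; as $r$ is arbitrary, the identity follows.

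I expect the main obstacle to be bookkeeping rather than conceptual: the delicate point is getting the leading entry $\alpha\theta_{t}(r_{n})$ right after the $u^{2}=u,\,v^{2}=v,\,uv=0$ reduction and confirming that its three Gray coordinates factor cleanly as $(\alpha_{1}+\alpha_{2})(a_{n}'+b_{n}')$ and $(\alpha_{1}+\alpha_{3})(a_{n}'+c_{n}')$, where a dropped cross-term would break the match. It is also worth stating clearly that, because the definition of $\pi_{3}$ reuses the symbol $\tau_{\alpha}$ on $F_{q}^{n}$ while $\alpha\in\mathcal{R}$, the scalar employed on each block must be read as the corresponding component of $\Phi(\alpha)$; this is precisely the feature that forces the two sides into agreement and is the conceptual content behind the computation.
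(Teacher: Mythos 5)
Your proof is correct and takes essentially the same route as the paper: a direct pointwise computation of both sides on an arbitrary $r\in\mathcal{R}^{n}$ followed by a block-by-block comparison. In fact you are more careful than the paper's own proof, which simply writes the wrapped entries as $\alpha a_{n}^{p^{t}}$, $\alpha a_{n}^{p^{t}}+\alpha b_{n}^{p^{t}}$, $\alpha a_{n}^{p^{t}}+\alpha c_{n}^{p^{t}}$ in all three blocks without performing the $u^{2}=u$, $v^{2}=v$, $uv=0$ reduction of $\alpha\theta_{t}(r_{n})$ or noting that the scalar acting on the $j$-th block of $\pi_{3}$ must be read as the $j$-th Gray coordinate $\alpha_{1}$, $\alpha_{1}+\alpha_{2}$, $\alpha_{1}+\alpha_{3}$ of $\alpha$ --- the very point your write-up makes explicit.
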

\begin{proof}
Let $r=(r_{1},r_{2},\dots, r_{n})\in \mathcal{R}^{n}$ where $r_{i}=a_{i}+ub_{i}+vc_{i}\in \mathcal{R}$ for $i=1,2,\dots, n.$ Then
\begin{align*}
\Phi(r_{1},r_{2},\dots, r_{n})&=(a_{1},a_{2},\dots, a_{n},a_{1}+b_{1},\dots,a_{n}+b_{n},a_{1}+c_{1},\dots, a_{n}+c_{n})
\end{align*}
Applying $\pi_{3}$, we have
\begin{align*}
\pi_{3}(\Phi(r))&=(\alpha a_{n}^{p^t},a_{o}^{p^t},\dots, a_{n-1}^{p^t},\alpha a_{n}^{p^t}+\alpha b_{n}^{p^t},\dots, a_{n-1}^{p^t}+b_{n-1}^{p^t},\alpha a_{n}^{p^t}+\alpha c_{n}^{p^t},.\\&~~~~..,a_{n-1}^{p^t}+c_{n-1}^{p^t}).
\end{align*}
On the other hand,
\begin{align*}
\tau_{\alpha}(r)&=(\alpha \theta_{t}(r_{n}),\theta_{t}(r_{0}),\dots, \theta_{t}(r_{n-1}))\\
\implies \Phi(\tau_{\alpha}(r))&=(\alpha a_{n}^{p^t},a_{o}^{p^t},\dots, a_{n-1}^{p^t},\alpha a_{n}^{p^t}+\alpha b_{n}^{p^t},\dots, a_{n-1}^{p^t}+b_{n-1}^{p^t},\alpha a_{n}^{p^t}+\alpha c_{n}^{p^t},.\\&~~~~..,a_{n-1}^{p^t}+c_{n-1}^{p^t}).
\end{align*}
Therefore, $\Phi\tau_{\alpha}=\pi_{3}\Phi.$
\end{proof}

\begin{thm}
A linear code $\mathcal{C}$ of length $n$ is a skew $\alpha$-constacycilc over $\mathcal{R}$ if and only if $\Phi(\mathcal{C})$ is a skew $\alpha$-quasi-cyclic code of length $3n$ over $F_{q}$ of index 3.
\end{thm}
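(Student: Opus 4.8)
The plan is to prove the biconditional by transferring the defining invariance property across the Gray map using the commuting relation $\Phi\tau_{\alpha}=\pi_{3}\Phi$ established in the preceding lemma, together with the fact that $\Phi$ is a bijective $F_{q}$-linear isometry from Theorem~2.1. The key observation is that both a skew $\alpha$-constacyclic code and a skew $\alpha$-quasi-cyclic code are defined purely by being fixed under a shift operator, so the entire argument reduces to manipulating the operator identity.

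\medskip
\noindent\textbf{Forward direction.} First I would assume $\mathcal{C}$ is a skew $\alpha$-constacyclic code over $\mathcal{R}$, which by definition means $\tau_{\alpha}(\mathcal{C})=\mathcal{C}$. Applying the Gray map and using the lemma, I compute
\begin{align*}
\pi_{3}(\Phi(\mathcal{C}))=\Phi(\tau_{\alpha}(\mathcal{C}))=\Phi(\mathcal{C}),
\end{align*}
where the first equality is exactly the operator identity $\pi_{3}\Phi=\Phi\tau_{\alpha}$ applied to the set $\mathcal{C}$, and the second uses $\tau_{\alpha}(\mathcal{C})=\mathcal{C}$. Since $\Phi(\mathcal{C})$ is a linear code of length $3n$ over $F_{q}$ by Theorem~2.2, the equation $\pi_{3}(\Phi(\mathcal{C}))=\Phi(\mathcal{C})$ is precisely the condition for $\Phi(\mathcal{C})$ to be a skew $\alpha$-quasi-cyclic code of index $3$.

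\medskip
\noindent\textbf{Converse direction.} Conversely, I would assume $\Phi(\mathcal{C})$ is skew $\alpha$-quasi-cyclic of index $3$, i.e.\ $\pi_{3}(\Phi(\mathcal{C}))=\Phi(\mathcal{C})$. Using the lemma again gives $\Phi(\tau_{\alpha}(\mathcal{C}))=\pi_{3}(\Phi(\mathcal{C}))=\Phi(\mathcal{C})$, so $\Phi(\tau_{\alpha}(\mathcal{C}))=\Phi(\mathcal{C})$. Because $\Phi$ is a bijection (Theorem~2.1), I can apply $\Phi^{-1}$ to both sides and conclude $\tau_{\alpha}(\mathcal{C})=\mathcal{C}$, which is the definition of $\mathcal{C}$ being skew $\alpha$-constacyclic over $\mathcal{R}$.

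\medskip
\noindent There is no genuine obstacle here; the theorem is a formal consequence of the lemma plus bijectivity of $\Phi$. The only point requiring a small amount of care is the application of $\Phi$ and $\pi_{3}$ at the level of \emph{sets} rather than individual vectors: one should note that for a bijection $\Phi$ and an operator identity holding pointwise, the image and preimage of a fixed set behave as claimed, and that injectivity of $\Phi$ is what licenses cancelling it in the converse. I would state this briefly rather than belabor it.
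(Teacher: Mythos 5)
Your proof is correct and follows essentially the same route as the paper: both directions are obtained by applying the operator identity $\Phi\tau_{\alpha}=\pi_{3}\Phi$ from Lemma~2.1 to the set $\mathcal{C}$, with injectivity of $\Phi$ used to cancel the Gray map in the direction going from $\Phi(\mathcal{C})$ back to $\mathcal{C}$. The only difference is cosmetic --- the paper presents the converse direction first --- so nothing further is needed.
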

\begin{proof}
Let $\Phi(\mathcal{C})$ be a skew $\alpha$-quasi-cyclic code of length $3n$ over $F_{q}$ of index 3. Then $\pi_{3}(\Phi(\mathcal{C}))=\Phi(\mathcal{C}).$ By Lemma 2.1, we have $\Phi(\tau_{\alpha}(\mathcal{C}))=\Phi(\mathcal{C}) \Rightarrow \tau_{\alpha}(\mathcal{C})=\mathcal{C}$ as $\phi$ is injective. Hence $\mathcal{C}$ is a skew $\alpha$-contacyclic code over $\mathcal{R}.$\\
Conversely, suppose $\mathcal{C}$ is a skew $\alpha$-contacyclic code of length $n$ over $\mathcal{R},$ then $\tau_{\alpha}(\mathcal{C})=\mathcal{C}.$ Applying $\Phi$, we have $\Phi(\tau_{\alpha}(\mathcal{C}))=\Phi(\mathcal{C})$ and by Lemma 2.1, $\pi_{3}(\Phi(\mathcal{C}))=\Phi(\mathcal{C}).$ Hence, $\Phi(\mathcal{C})$ is a skew $\alpha$-quasi-cyclic code of length $3n$ over $F_{q}$ of index 3.

\end{proof}

\section{Linear code over $\mathcal{R}$}
Let $B_{1}, B_{2}$ and $B_{3}$ be linear codes and following \cite{dertli15}, we define
\begin{align*}
B_{1}\otimes B_{2}\otimes B_{3}=\big \{(b_{1},b_{2},b_{3})\mid b_{i}\in B_{i} ~\forall ~i=1,2,3\big \}
\end{align*}
and
\begin{align*}
B_{1}\oplus B_{2}\oplus B_{3}=\big \{b_{1}+b_{2}+b_{3}\mid b_{i}\in B_{i} ~\forall ~i=1,2,3\big \}.
\end{align*}

We observed that any element $a+ub+vc\in \mathcal{R}$ can be written as $a+ub+vc=(1-u-v)a+u(a+b)+v(a+c).$ Let $\mathcal{C}$ be a linear code of length $n$ over $\mathcal{R}$ and let 
\begin{align*}
\mathcal{C}_{1}=\big \{a\in F^{n}_{q}\mid a+ub+vc\in \mathcal{C} ~for ~some~ b,c\in F^{n}_{q}\big \};\\
\mathcal{C}_{2}=\big \{a+b\in F^{n}_{q}\mid a+ub+vc\in \mathcal{C} ~for~ some~ c\in F^{n}_{q}\big \};\\
\mathcal{C}_{3}=\big \{a+c\in F^{n}_{q}\mid a+ub+vc\in \mathcal{C} ~for~ some~ b\in F^{n}_{q}\big \}.\\
\end{align*}
Then $\mathcal{C}_{1},\mathcal{C}_{2}$ and $\mathcal{C}_{3}$ are linear codes of length $n$ over $F_{q}$ and $\mathcal{C}$ can be uniquely expressed as $\mathcal{C}=(1-u-v)\mathcal{C}_{1}\oplus u\mathcal{C}_{2}\oplus v\mathcal{C}_{3}.$\\

Now, we present some generalize results over $\mathcal{R}$ based on the results from \cite{dertli15}. These results can be proved easily by following the same procedure of proof as given in \cite{dertli15}.

\begin{thm}
If $\mathcal{C}$ be a linear code of length $n$ over $\mathcal{R}$, then $\Phi(\mathcal{C})=\mathcal{C}_{1}\otimes \mathcal{C}_{2}\otimes \mathcal{C}_{3}$ and $\mid \mathcal{C}\mid=\mid \mathcal{C}_{1}\mid \mid \mathcal{C}_{2}\mid \mid \mathcal{C}_{3}\mid =q^{3n-\sum_{i=1}^{3}f_{i}(x)}$ where $f_{1}(x), f_{2}(x) , f_{3}(x)$ are generators of $\mathcal{C}_{1}, \mathcal{C}_{2}$ and $\mathcal{C}_{3}$ respectively.
\end{thm}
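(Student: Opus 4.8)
The plan is to establish the set equality $\Phi(\mathcal{C}) = \mathcal{C}_{1}\otimes \mathcal{C}_{2}\otimes \mathcal{C}_{3}$ by a double inclusion, and then to read off the cardinality from the bijectivity of $\Phi$ together with the degrees of the generating polynomials. The backbone of the argument is the orthogonal idempotent structure of $\mathcal{R}$: setting $e_{1}=1-u-v$, $e_{2}=u$, $e_{3}=v$, one checks that $e_{i}^{2}=e_{i}$, $e_{i}e_{j}=0$ for $i\neq j$, and $e_{1}+e_{2}+e_{3}=1$, so that the decomposition $r=a+ub+vc=e_{1}a+e_{2}(a+b)+e_{3}(a+c)$ exhibits the three Gray coordinates $(a,a+b,a+c)$ as the component-wise projections of $r$. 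This is precisely the identity recorded before the statement, and the induced code decomposition $\mathcal{C}=(1-u-v)\mathcal{C}_{1}\oplus u\mathcal{C}_{2}\oplus v\mathcal{C}_{3}$ will do the heavy lifting.

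For the inclusion $\Phi(\mathcal{C})\subseteq \mathcal{C}_{1}\otimes \mathcal{C}_{2}\otimes \mathcal{C}_{3}$, I would take an arbitrary codeword $r=(r_{1},\dots,r_{n})\in\mathcal{C}$ with $r_{i}=a_{i}+ub_{i}+vc_{i}$, set $a=(a_{1},\dots,a_{n})$, $b=(b_{1},\dots,b_{n})$, $c=(c_{1},\dots,c_{n})$, and observe that by the very definitions of $\mathcal{C}_{1},\mathcal{C}_{2},\mathcal{C}_{3}$ we have $a\in\mathcal{C}_{1}$, $a+b\in\mathcal{C}_{2}$, and $a+c\in\mathcal{C}_{3}$. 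Since $\Phi(r)=(a\mid a+b\mid a+c)$ is exactly the block whose three parts are these vectors, it lies in $\mathcal{C}_{1}\otimes \mathcal{C}_{2}\otimes \mathcal{C}_{3}$; this direction is immediate from the definitions.

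The substantive direction is the reverse inclusion, and this is where I expect the only genuine work. Given $(x\mid y\mid z)$ with $x\in\mathcal{C}_{1}$, $y\in\mathcal{C}_{2}$, $z\in\mathcal{C}_{3}$ chosen \emph{independently}, I must produce a single codeword of $\mathcal{C}$ whose Gray image equals $(x\mid y\mid z)$. Here I invoke the decomposition $\mathcal{C}=(1-u-v)\mathcal{C}_{1}\oplus u\mathcal{C}_{2}\oplus v\mathcal{C}_{3}$, which guarantees that $m=(1-u-v)x+uy+vz$ is itself a codeword. Rewriting $m$ coordinate-wise as $m_{i}=x_{i}+u(y_{i}-x_{i})+v(z_{i}-x_{i})$, that is, in the normal form $a_{i}+ub_{i}+vc_{i}$ with $a_{i}=x_{i}$, $b_{i}=y_{i}-x_{i}$, $c_{i}=z_{i}-x_{i}$, a direct substitution into $\Phi$ yields $\Phi(m)=(x\mid x+(y-x)\mid x+(z-x))=(x\mid y\mid z)$. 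Hence $(x\mid y\mid z)\in\Phi(\mathcal{C})$, and the two inclusions together give $\Phi(\mathcal{C})=\mathcal{C}_{1}\otimes \mathcal{C}_{2}\otimes \mathcal{C}_{3}$. The point to keep honest is exactly that the three components range freely; it is the direct-sum form of the decomposition that certifies every such combination is again a codeword, which is why the obstacle lies in this inclusion rather than the first.

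Finally, for the cardinality I would use that $\Phi$ is a bijection by Theorem 2.1, so $\mid\mathcal{C}\mid=\mid\Phi(\mathcal{C})\mid=\mid\mathcal{C}_{1}\otimes\mathcal{C}_{2}\otimes\mathcal{C}_{3}\mid=\mid\mathcal{C}_{1}\mid\,\mid\mathcal{C}_{2}\mid\,\mid\mathcal{C}_{3}\mid$. Each $\mathcal{C}_{i}$ is a code over $F_{q}$ of length $n$ with generator $f_{i}(x)$, so its dimension is $n-\deg f_{i}(x)$ and $\mid\mathcal{C}_{i}\mid=q^{\,n-\deg f_{i}(x)}$; multiplying the three expressions gives $\mid\mathcal{C}\mid=q^{\,3n-\sum_{i=1}^{3}\deg f_{i}(x)}$, which is the asserted formula (reading the exponent $\sum_{i=1}^{3}f_{i}(x)$ as the sum of the degrees of the generators).
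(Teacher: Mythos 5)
Your proof is correct and is exactly the argument the paper intends: the paper itself omits the proof of this theorem (deferring to the procedure in the cited reference \cite{dertli15}), and your double inclusion via the idempotent decomposition $\mathcal{C}=(1-u-v)\mathcal{C}_{1}\oplus u\mathcal{C}_{2}\oplus v\mathcal{C}_{3}$, followed by bijectivity of $\Phi$ and the dimension count $n-\deg f_{i}(x)$, supplies precisely the missing details. Your reading of the exponent as $3n-\sum_{i=1}^{3}\deg f_{i}(x)$ is also the correct interpretation of the paper's typo.
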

\begin{cor}
If $G_{1}, G_{2}, G_{3}$ are generator matrices of $\mathcal{C}_{1}, \mathcal{C}_{2}$ and $\mathcal{C}_{3}$ respectively, then $\mathcal{C}$ has generator matrix $G$ as\\

\[
G=
  \begin{bmatrix}
    (1-u-v)G_{1} \\
    uG_{2}\\
    vG_{3}
  \end{bmatrix}.
\]
\end{cor}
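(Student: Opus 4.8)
The plan is to show directly that the rows of $G$ generate $\mathcal{C}$ as an $\mathcal{R}$-module, by establishing the two inclusions $\langle G\rangle \subseteq \mathcal{C}$ and $\mathcal{C}\subseteq\langle G\rangle$; this makes the corollary essentially the matrix-level restatement of the decomposition $\mathcal{C}=(1-u-v)\mathcal{C}_{1}\oplus u\mathcal{C}_{2}\oplus v\mathcal{C}_{3}$ already recorded before Theorem 3.1. The structural fact I would lean on is that $e_{1}=1-u-v$, $e_{2}=u$, $e_{3}=v$ are pairwise orthogonal idempotents with $e_{1}+e_{2}+e_{3}=1$; this is immediate from $u^{2}=u$, $v^{2}=v$, $uv=vu=0$, and it is exactly what keeps the three blocks of $G$ from interfering with one another.

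First I would verify $\langle G\rangle\subseteq\mathcal{C}$. Each row of the first block is $(1-u-v)\mathbf{g}$ with $\mathbf{g}$ a row of $G_{1}$, so $\mathbf{g}\in\mathcal{C}_{1}$ and $(1-u-v)\mathbf{g}\in(1-u-v)\mathcal{C}_{1}\subseteq\mathcal{C}$; the same reasoning applies to the rows of $uG_{2}$ and $vG_{3}$. As $\mathcal{C}$ is an $\mathcal{R}$-module, every $\mathcal{R}$-linear combination of these rows stays in $\mathcal{C}$, which gives the inclusion.

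For the reverse inclusion I would pick an arbitrary $c\in\mathcal{C}$ and invoke the decomposition to write $c=(1-u-v)a+u\,b+v\,d$ with $a\in\mathcal{C}_{1}$, $b\in\mathcal{C}_{2}$, $d\in\mathcal{C}_{3}$. Since $G_{i}$ generates $\mathcal{C}_{i}$ over $F_{q}$, I expand $a=\sum_{j}\lambda_{j}\mathbf{g}_{j}^{(1)}$, $b=\sum_{j}\mu_{j}\mathbf{g}_{j}^{(2)}$, $d=\sum_{j}\nu_{j}\mathbf{g}_{j}^{(3)}$ with scalars $\lambda_{j},\mu_{j},\nu_{j}\in F_{q}\subseteq\mathcal{R}$ and $\mathbf{g}_{j}^{(i)}$ the rows of $G_{i}$. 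Multiplying each expansion by the corresponding idempotent and regrouping yields $c=\sum_{j}\lambda_{j}\big((1-u-v)\mathbf{g}_{j}^{(1)}\big)+\sum_{j}\mu_{j}\big(u\mathbf{g}_{j}^{(2)}\big)+\sum_{j}\nu_{j}\big(v\mathbf{g}_{j}^{(3)}\big)$, which is manifestly an $\mathcal{R}$-linear combination of the rows of $G$. Hence $c\in\langle G\rangle$, and the two inclusions together give $\mathcal{C}=\langle G\rangle$.

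I do not anticipate a real obstacle; the only point needing care is the promotion of the $F_{q}$-spanning of each $\mathcal{C}_{i}$ to $\mathcal{R}$-spanning of $\mathcal{C}$. This step is legitimate precisely because $F_{q}\subseteq\mathcal{R}$ and because the relations $e_{i}e_{j}=0$ for $i\neq j$ guarantee that the idempotent $e_{i}$ extracts exactly the $\mathcal{C}_{i}$-component of any codeword, so the three blocks never mix. One may add the remark that the construction automatically gives $G$ the expected number of rows, namely the sum of the row counts of $G_{1},G_{2},G_{3}$, in agreement with the cardinality formula of Theorem 3.1.
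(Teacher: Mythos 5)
Your argument is correct and is exactly the standard verification the paper has in mind: the corollary is stated without proof, deferred to the decomposition $\mathcal{C}=(1-u-v)\mathcal{C}_{1}\oplus u\mathcal{C}_{2}\oplus v\mathcal{C}_{3}$ and the analogous result in the cited reference, and your two-inclusion argument using the orthogonal idempotents $1-u-v$, $u$, $v$ is precisely how that deferral is meant to be filled in. No gaps; the one point you flag (promoting $F_{q}$-spans of the $\mathcal{C}_{i}$ to an $\mathcal{R}$-span of $\mathcal{C}$) is handled correctly.
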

\begin{cor}
Let $\mathcal{C}=(1-u-v)\mathcal{C}_{1}\oplus u\mathcal{C}_{2}\oplus v\mathcal{C}_{3}$ be a linear code of length $n$ over $\mathcal{R}$ where $\mathcal{C}_{i}$ is $[n,k_{i},d(\mathcal{C}_{i})]$ linear code over $F_{q}$ for $i=1,2,3$. Then $\Phi(\mathcal{C})$ is $[3n,k_{1}+k_{2}+k_{3},min\big \{d(\mathcal{C}_{1}),d(\mathcal{C}_{2}),d(\mathcal{C}_{2})\big \}]$ linear code over $F_{q}.$
\end{cor}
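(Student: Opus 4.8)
The plan is to leverage the decomposition $\mathcal{C} = (1-u-v)\mathcal{C}_1 \oplus u\mathcal{C}_2 \oplus v\mathcal{C}_3$ together with the earlier result (Theorem~3.1) that $\Phi(\mathcal{C}) = \mathcal{C}_1 \otimes \mathcal{C}_2 \otimes \mathcal{C}_3$. Since the claim asserts three things about $\Phi(\mathcal{C})$ — its length, its dimension, and its minimum distance — I would establish each parameter in turn. The length is immediate: $\Phi$ maps $\mathcal{R}^n$ to $F_q^{3n}$, so any image code has length $3n$. For the dimension, I would argue that the Cartesian-product structure $\mathcal{C}_1 \otimes \mathcal{C}_2 \otimes \mathcal{C}_3$ forces the dimensions to add, since a basis for the product is obtained by juxtaposing bases of the three factors placed in disjoint coordinate blocks. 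Hence $\dim \Phi(\mathcal{C}) = k_1 + k_2 + k_3$. Alternatively, one can invoke Theorem~3.1 directly: $|\mathcal{C}| = |\mathcal{C}_1||\mathcal{C}_2||\mathcal{C}_3| = q^{k_1+k_2+k_3}$, and since $\Phi$ is a bijection (Theorem~2.1), $|\Phi(\mathcal{C})| = q^{k_1+k_2+k_3}$, giving the dimension $k_1+k_2+k_3$ over $F_q$.

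The substantive part is the minimum distance. First I would recall from Theorem~2.1 that $\Phi$ is a distance-preserving (isometry) map from $\mathcal{R}^n$ under the Lee metric to $F_q^{3n}$ under the Hamming metric, so $d_H(\Phi(\mathcal{C})) = d_L(\mathcal{C})$; the task reduces to showing $d_L(\mathcal{C}) = \min\{d(\mathcal{C}_1), d(\mathcal{C}_2), d(\mathcal{C}_3)\}$. Here I would take a nonzero codeword $r = (1-u-v)r_1 + u r_2 + v r_3 \in \mathcal{C}$, where $r_i$ ranges over $\mathcal{C}_i$, and observe that the Lee weight decomposes as $w_L(r) = w_H(\Phi(r)) = w_H(r_1) + w_H(r_2) + w_H(r_3)$, since the Gray image places the three components in disjoint blocks and $\Phi(a+ub+vc)=(a,a+b,a+c)$ yields exactly the three projections $r_1 = a$, $r_2 = a+b$, $r_3 = a+c$. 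This makes the minimum Lee weight of $\mathcal{C}$ equal to the minimum, over nonzero choices, of $w_H(r_1)+w_H(r_2)+w_H(r_3)$.

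From this decomposition I would derive the minimum in two inequalities. For the upper bound $d_L(\mathcal{C}) \le \min_i d(\mathcal{C}_i)$: to realize, say, $d(\mathcal{C}_1)$, I would select a minimum-weight codeword in $\mathcal{C}_1$ and pair it with the zero codewords in $\mathcal{C}_2, \mathcal{C}_3$, which is legitimate because the direct-sum structure allows each component to be chosen independently; the resulting codeword has Lee weight exactly $d(\mathcal{C}_1)$, and similarly for the other two factors. For the lower bound $d_L(\mathcal{C}) \ge \min_i d(\mathcal{C}_i)$: for any nonzero $r$, at least one component $r_i$ is nonzero, so $w_H(r_1)+w_H(r_2)+w_H(r_3) \ge w_H(r_i) \ge d(\mathcal{C}_i) \ge \min_j d(\mathcal{C}_j)$. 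Combining the two gives equality.

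The main obstacle — really the only place requiring care — is verifying that the components can be chosen independently, i.e. that the upper-bound construction genuinely produces a codeword of $\mathcal{C}$. This is exactly what the unique decomposition $\mathcal{C} = (1-u-v)\mathcal{C}_1 \oplus u\mathcal{C}_2 \oplus v\mathcal{C}_3$ guarantees, so I would make sure to cite that structural fact explicitly rather than treat it as obvious. (I note in passing that the statement as written contains a typo — the minimum is over $\{d(\mathcal{C}_1), d(\mathcal{C}_2), d(\mathcal{C}_2)\}$ where the last entry should read $d(\mathcal{C}_3)$ — and the proof should silently use the correct three-term minimum.) Everything else is a routine consequence of Theorems~2.1 and~3.1, so the proof is short and largely a matter of assembling results already established.
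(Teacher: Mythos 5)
Your proof is correct and follows exactly the route the paper intends: the corollary is stated there without an explicit proof, being deferred to Theorem 3.1 (the identification $\Phi(\mathcal{C})=\mathcal{C}_{1}\otimes \mathcal{C}_{2}\otimes \mathcal{C}_{3}$ in disjoint blocks) together with the isometry and bijectivity of $\Phi$ from Theorems 2.1--2.2, which is precisely what you assemble, including the two-inequality argument for the minimum distance. Your remark that the last entry of the minimum should read $d(\mathcal{C}_{3})$ rather than $d(\mathcal{C}_{2})$ is also correct.
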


For a code $\mathcal{C}$ dual code is denoted by $\mathcal{C}^{\perp}$ and defined as $\mathcal{C}^{\perp}=\big \{ x\in \mathcal{R}^{n} \mid x.y=0 ~\forall ~y\in \mathcal{C}\big \}$ where inner product for any two codewords $x=(x_{1},x_{2},...,x_{n}), y=(y_{1},y_{2},...,y_{n})$ in $\mathcal{R}^{n}$ is define by $x.y=\sum_{i=1}^{n}x_{i}y_{i}$. The code $\mathcal{C}$ is said to be self-dual code if and only if $\mathcal{C}^{\perp}=\mathcal{C}.$
\begin{thm}
Let $\mathcal{C}$ be a linear code of length $n$ over $\mathcal{R}$. Then $\Phi(\mathcal{C}^{\perp})=\Phi(\mathcal{C})^{\perp}$. Further, $\mathcal{C}$ is self-dual if and only if $\Phi(\mathcal{C})$ is so.
\end{thm}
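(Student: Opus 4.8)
The plan is to push everything through the idempotent decomposition $\mathcal{C}=(1-u-v)\mathcal{C}_{1}\oplus u\mathcal{C}_{2}\oplus v\mathcal{C}_{3}$ recorded just before Theorem 3.1, and to combine it with the identity $\Phi(\mathcal{C})=\mathcal{C}_{1}\otimes\mathcal{C}_{2}\otimes\mathcal{C}_{3}$ of Theorem 3.1. The whole argument then reduces to showing that the two ways of taking a dual — over $\mathcal{R}$, and over $F_{q}$ after applying $\Phi$ — both collapse to the same componentwise duals $\mathcal{C}_{1}^{\perp},\mathcal{C}_{2}^{\perp},\mathcal{C}_{3}^{\perp}$.

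First I would establish the dual analogue of the primal decomposition, namely $\mathcal{C}^{\perp}=(1-u-v)\mathcal{C}_{1}^{\perp}\oplus u\mathcal{C}_{2}^{\perp}\oplus v\mathcal{C}_{3}^{\perp}$. The key observation is that $e_{1}=1-u-v$, $e_{2}=u$, $e_{3}=v$ are pairwise orthogonal idempotents with $e_{1}+e_{2}+e_{3}=1$ (this uses $u^{2}=u$, $v^{2}=v$, $uv=vu=0$), so that $\mathcal{R}=e_{1}F_{q}\oplus e_{2}F_{q}\oplus e_{3}F_{q}$ and an element of $\mathcal{R}$ vanishes exactly when all three of its idempotent components vanish. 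Writing $x=e_{1}X_{1}+e_{2}X_{2}+e_{3}X_{3}$ and $y=e_{1}Y_{1}+e_{2}Y_{2}+e_{3}Y_{3}$ with $X_{i},Y_{i}\in F_{q}^{n}$, the cross terms $e_{i}e_{j}$ with $i\neq j$ drop out and one obtains $x\cdot y=e_{1}(X_{1}\cdot Y_{1})+e_{2}(X_{2}\cdot Y_{2})+e_{3}(X_{3}\cdot Y_{3})$. Hence $x\in\mathcal{C}^{\perp}$ if and only if $X_{i}\cdot Y_{i}=0$ for every $Y_{i}$ ranging over $\mathcal{C}_{i}$ and each $i$, i.e. if and only if $X_{i}\in\mathcal{C}_{i}^{\perp}$ for all $i$; this is precisely the claimed decomposition.

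Next I would compute the two sides separately. Applying Theorem 3.1 to $\mathcal{C}^{\perp}$, whose three component codes are now $\mathcal{C}_{1}^{\perp},\mathcal{C}_{2}^{\perp},\mathcal{C}_{3}^{\perp}$ by the previous step, gives $\Phi(\mathcal{C}^{\perp})=\mathcal{C}_{1}^{\perp}\otimes\mathcal{C}_{2}^{\perp}\otimes\mathcal{C}_{3}^{\perp}$. On the other hand, since $\Phi(\mathcal{C})=\mathcal{C}_{1}\otimes\mathcal{C}_{2}\otimes\mathcal{C}_{3}$ is a Cartesian product supported on three disjoint blocks of $F_{q}^{3n}$, the Euclidean inner product on $F_{q}^{3n}$ splits blockwise, and a short argument (set two blocks to zero and let the third vary) shows $(\mathcal{C}_{1}\otimes\mathcal{C}_{2}\otimes\mathcal{C}_{3})^{\perp}=\mathcal{C}_{1}^{\perp}\otimes\mathcal{C}_{2}^{\perp}\otimes\mathcal{C}_{3}^{\perp}$. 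Comparing the two expressions yields $\Phi(\mathcal{C}^{\perp})=\Phi(\mathcal{C})^{\perp}$. The self-dual statement is then immediate: if $\mathcal{C}=\mathcal{C}^{\perp}$ then $\Phi(\mathcal{C})=\Phi(\mathcal{C}^{\perp})=\Phi(\mathcal{C})^{\perp}$, and conversely $\Phi(\mathcal{C})=\Phi(\mathcal{C})^{\perp}=\Phi(\mathcal{C}^{\perp})$ forces $\mathcal{C}=\mathcal{C}^{\perp}$ because $\Phi$ is a bijection by Theorem 2.1.

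The main obstacle is the dual decomposition step. The delicate point is that $\Phi$ itself does \emph{not} preserve inner products: a direct computation shows that $\langle\Phi(x),\Phi(y)\rangle$ equals the \emph{sum} of the three idempotent components of $x\cdot y$, whereas orthogonality over $\mathcal{R}$ demands that each of those three components vanish separately. Consequently one cannot deduce $\Phi(\mathcal{C}^{\perp})=\Phi(\mathcal{C})^{\perp}$ by transporting orthogonality through $\Phi$ directly; the equality genuinely relies on the three constituent codes decoupling, which is exactly why the orthogonality of the idempotents — guaranteed by the relations $u^{2}=u$, $v^{2}=v$, $uv=vu=0$ — is doing the real work.
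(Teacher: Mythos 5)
Your argument is correct, but it follows a genuinely different route from the paper's. The paper proves the identity $\Phi(\mathcal{C}^{\perp})=\Phi(\mathcal{C})^{\perp}$ by a direct inclusion-plus-counting argument: for $s_{1}\in\mathcal{C}$ and $s_{2}\in\mathcal{C}^{\perp}$ it expands $s_{1}\cdot s_{2}=0$ in the basis $1,u,v$ to obtain the three component equations $a_{1}a_{2}=0$, $a_{1}b_{2}+a_{2}b_{1}+b_{1}b_{2}=0$, $a_{1}c_{2}+a_{2}c_{1}+c_{1}c_{2}=0$, observes that $\Phi(s_{1})\cdot\Phi(s_{2})=3a_{1}a_{2}+a_{1}b_{2}+a_{2}b_{1}+b_{1}b_{2}+a_{1}c_{2}+a_{2}c_{1}+c_{1}c_{2}$ is precisely a sum of these vanishing quantities, concludes $\Phi(\mathcal{C}^{\perp})\subseteq\Phi(\mathcal{C})^{\perp}$, and then upgrades the inclusion to equality via the cardinality identity $|\Phi(\mathcal{C}^{\perp})|=|\Phi(\mathcal{C})^{\perp}|$. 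That last step silently uses $|\mathcal{C}^{\perp}|=q^{3n}/|\mathcal{C}|$ over $\mathcal{R}$ (true because $\mathcal{R}$ is a product of fields, but not actually supplied by the cited Theorem~2.2, which only says $\Phi$ preserves cardinality). Your route through the orthogonal idempotents $e_{1}=1-u-v$, $e_{2}=u$, $e_{3}=v$ avoids any counting: you compute both $\Phi(\mathcal{C}^{\perp})$ and $\Phi(\mathcal{C})^{\perp}$ explicitly as $\mathcal{C}_{1}^{\perp}\otimes\mathcal{C}_{2}^{\perp}\otimes\mathcal{C}_{3}^{\perp}$, and in doing so you establish the decomposition $\mathcal{C}^{\perp}=(1-u-v)\mathcal{C}_{1}^{\perp}\oplus u\mathcal{C}_{2}^{\perp}\oplus v\mathcal{C}_{3}^{\perp}$, which is exactly the first assertion of the paper's Theorem~3.4 (proved there separately and later); your argument would let the two theorems be merged. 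Your closing remark that $\langle\Phi(x),\Phi(y)\rangle$ is only the \emph{sum} of the idempotent components of $x\cdot y$ is the same phenomenon the paper exploits in one direction (components zero implies sum zero); the converse is where the paper needs counting and you need the decoupling of the $\mathcal{C}_{i}$. Both proofs are valid; yours is longer but more transparent and self-contained, the paper's is shorter but leans on an unproved duality cardinality fact.
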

\begin{proof}
For any $s_{1}=a_{1}+ub_{1}+vc_{1}\in \mathcal{C}$ and $s_{2}=a_{2}+ub_{2}+vc_{2}\in \mathcal{C}^{\perp}$ where $a_{1},a_{2},b_{1},b_{2},c_{1},c_{2}\in F^{n}_{q}$, then $s_{1}.s_{2}=0$, and this implies $a_{1}a_{2}=0, b_{1}b_{2}+a_{1}b_{2}+a_{2}b_{1}=0, c_{1}c_{2}+a_{1}c_{2}+a_{2}c_{1}=0.$ Now, $\Phi(s_{1}).\Phi(s_{2})=3a_{1}a_{2}+ b_{1}b_{2}+a_{1}b_{2}+a_{2}b_{1}+c_{1}c_{2}+a_{1}c_{2}+a_{2}c_{1}=0.$ Then $\Phi(\mathcal{C}^{\perp})\subseteq \Phi(\mathcal{C})^{\perp}$. Again, by Theorem 2.2,  $\mid \Phi(\mathcal{C}^{\perp})\mid=\mid \Phi(\mathcal{C})^{\perp}\mid$, and therefore, $\Phi(\mathcal{C}^{\perp})=\Phi(\mathcal{C})^{\perp}$. \\
Also, if $\mathcal{C}$ is self-dual $\Leftrightarrow \mathcal{C}^{\perp}=\mathcal{C} \Leftrightarrow \Phi(\mathcal{C}^{\perp})=\Phi(\mathcal{C}) \Leftrightarrow \Phi(\mathcal{C})^{\perp}=\Phi(\mathcal{C}) \Leftrightarrow \Phi(\mathcal{C})$ is self-dual.
\end{proof}

\begin{thm}
 Let $\mathcal{C}=(1-u-v)\mathcal{C}_{1}\oplus u\mathcal{C}_{2}\oplus v\mathcal{C}_{3}$ be a linear code of length $n$ over $\mathcal{R}.$ Then $\mathcal{C}^{\perp}=(1-u-v)\mathcal{C}_{1}^{\perp}\oplus u\mathcal{C}_{2}^{\perp}\oplus v\mathcal{C}_{3}^{\perp}.$ Further, $\mathcal{C}$ is self-dual if and only if $\mathcal{C}_{1},\mathcal{C}_{2},\mathcal{C}_{3}$ are self-duals.
\end{thm}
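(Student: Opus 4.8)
The plan is to build the argument around the complete set of pairwise orthogonal idempotents $e_1=1-u-v$, $e_2=u$, $e_3=v$, which satisfy $e_i^2=e_i$, $e_ie_j=0$ for $i\neq j$, and $e_1+e_2+e_3=1$; these are precisely the coefficients appearing in the decomposition $\mathcal{C}=e_1\mathcal{C}_1\oplus e_2\mathcal{C}_2\oplus e_3\mathcal{C}_3$. Writing $\mathcal{D}=e_1\mathcal{C}_1^{\perp}\oplus e_2\mathcal{C}_2^{\perp}\oplus e_3\mathcal{C}_3^{\perp}$, the goal is to prove $\mathcal{C}^{\perp}=\mathcal{D}$ by a two-sided inclusion and then read off the self-dual statement.

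First I would establish $\mathcal{D}\subseteq\mathcal{C}^{\perp}$ by a direct inner-product computation. Taking a typical $x=e_1x_1+e_2x_2+e_3x_3\in\mathcal{D}$ with $x_i\in\mathcal{C}_i^{\perp}$, and $y=e_1y_1+e_2y_2+e_3y_3\in\mathcal{C}$ with $y_i\in\mathcal{C}_i$, the orthogonality $e_ie_j=0$ kills every cross term and $e_i^2=e_i$ keeps the diagonal ones, so the inner product collapses to $x\cdot y=e_1(x_1\cdot y_1)+e_2(x_2\cdot y_2)+e_3(x_3\cdot y_3)$. Each scalar $x_i\cdot y_i=0$ because $x_i\in\mathcal{C}_i^{\perp}$, hence $x\cdot y=0$ and $x\in\mathcal{C}^{\perp}$.

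For the reverse inclusion I would take $x\in\mathcal{C}^{\perp}$ and write $x=e_1x_1+e_2x_2+e_3x_3$ with $x_i\in F_q^{n}$. For a fixed $i$ and any $y_i\in\mathcal{C}_i$, the word $e_iy_i$ lies in $\mathcal{C}$ (set the other two components to $0\in\mathcal{C}_j$), so $0=x\cdot(e_iy_i)=e_i(x_i\cdot y_i)$. Expanding $e_i$ in the $F_q$-basis $\{1,u,v\}$ shows that $e_if=0$ for $f\in F_q$ forces $f=0$, whence $x_i\cdot y_i=0$ for every $y_i\in\mathcal{C}_i$, i.e. $x_i\in\mathcal{C}_i^{\perp}$. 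This gives $\mathcal{C}^{\perp}\subseteq\mathcal{D}$, and combined with the first step, $\mathcal{C}^{\perp}=\mathcal{D}$. Alternatively, one inclusion plus the count $|\mathcal{D}|=\prod_i|\mathcal{C}_i^{\perp}|=q^{3n}/|\mathcal{C}|=|\mathcal{C}^{\perp}|$ would close the argument, using that $\mathcal{R}$, being a product of fields, is Frobenius so that $|\mathcal{C}|\,|\mathcal{C}^{\perp}|=q^{3n}$.

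Finally, for the self-dual claim I would invoke the uniqueness of the decomposition of a linear code over $\mathcal{R}$ recorded just before Theorem 3.1: comparing $\mathcal{C}=e_1\mathcal{C}_1\oplus e_2\mathcal{C}_2\oplus e_3\mathcal{C}_3$ with the now-established $\mathcal{C}^{\perp}=e_1\mathcal{C}_1^{\perp}\oplus e_2\mathcal{C}_2^{\perp}\oplus e_3\mathcal{C}_3^{\perp}$, the equality $\mathcal{C}=\mathcal{C}^{\perp}$ is equivalent, component by component, to $\mathcal{C}_i=\mathcal{C}_i^{\perp}$ for $i=1,2,3$, so $\mathcal{C}$ is self-dual iff each $\mathcal{C}_i$ is. The one genuinely delicate point is the reverse inclusion: since $e_1,e_2,e_3$ are zero divisors in $\mathcal{R}$, one cannot simply cancel $e_i$, and the conclusion $x_i\cdot y_i=0$ must be extracted coordinatewise in the basis $\{1,u,v\}$ (or circumvented by the Frobenius cardinality count); everything else is routine bilinear algebra with the idempotents.
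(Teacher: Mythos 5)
Your proof is correct, and it reaches the same conclusion by the same underlying idempotent calculus ($e_i^2=e_i$, $e_ie_j=0$), but the logical scaffolding differs from the paper's in a way worth noting. The paper first applies the canonical decomposition to $\mathcal{C}^{\perp}$ itself, obtaining components $\widehat{\mathcal{C}}_i$ with $\mathcal{C}^{\perp}=(1-u-v)\widehat{\mathcal{C}}_1\oplus u\widehat{\mathcal{C}}_2\oplus v\widehat{\mathcal{C}}_3$, and then proves $\widehat{\mathcal{C}}_i=\mathcal{C}_i^{\perp}$, leaning twice on the uniqueness of that decomposition (once to place $(1-u-v)s$ in the right component, once implicitly for the ``obvious'' inclusion $\widehat{\mathcal{C}}_1\subseteq\mathcal{C}_1^{\perp}$); for the self-dual equivalence it routes through the Gray map and Theorem 3.1. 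You instead name the candidate $\mathcal{D}$ and prove $\mathcal{C}^{\perp}=\mathcal{D}$ by a self-contained double inclusion, and your treatment of the delicate direction is more explicit than the paper's: you correctly observe that one cannot cancel the zero-divisor $e_i$ in $e_i(x_i\cdot y_i)=0$ and instead extract $x_i\cdot y_i=0$ by expanding in the basis $\{1,u,v\}$, with the Frobenius cardinality count as a legitimate fallback. Your version of the self-dual claim, via uniqueness of the decomposition rather than the Gray map, is also sound and arguably cleaner since it does not import Theorem 3.1. In short: same algebraic engine, but your argument is more direct and fills in the step the paper labels ``obviously.''
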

\begin{proof}
Let $\mathcal{\widehat{C}}_{1}=\big \{a\in F^{n}_{q}\mid a+ub+vc\in \mathcal{C}^{\perp}$ for some $b,c\in F^{n}_{q}\big \}$,
$\mathcal{\widehat{C}}_{2}=\big \{a+b\in F^{n}_{q}\mid a+ub+vc\in \mathcal{C}^{\perp}$ for some $c\in F^{n}_{q}\big \}$,
$\mathcal{\widehat{C}}_{3}=\big \{a+c\in F^{n}_{q}\mid a+ub+vc\in \mathcal{C}^{\perp}$ for some $b\in F^{n}_{q}\big \}$.\\ Then $\mathcal{C}^{\perp}$ can be uniquely expressed as $\mathcal{C}^{\perp}=(1-u-v)\mathcal{\widehat{C}}_{1}\oplus u\mathcal{\widehat{C}}_{2}\oplus v\mathcal{\widehat{C}}_{3}.$ Obviously, $\mathcal{\widehat{C}}_{1}\subseteq \mathcal{C}_{1}^{\perp}.$ Also, for any $s\in \mathcal{C}_{1}^{\perp} \Rightarrow s.a=0 ~\forall a ~\in \mathcal{C}_{1}.$ Let $r=(1-u-v)a+ub+vc\in \mathcal{C}$ where $b,c\in F^{n}_{q}.$ Then $(1-u-v)s.r=0\Rightarrow(1-u-v)s\in \mathcal{C}^{\perp}.$ By uniqueness of $\mathcal{C}^{\perp}$, $s\in \mathcal{\widehat{C}}_{1}$. Therefore,  $\mathcal{C}_{1}^{\perp}\subseteq\mathcal{\widehat{C}}_{1}.$ Hence  $\mathcal{C}_{1}^{\perp}=\mathcal{\widehat{C}}_{1}.$ Similarly we can see that  $\mathcal{C}_{2}^{\perp}=\mathcal{\widehat{C}}_{2},\mathcal{C}_{3}^{\perp}=\mathcal{\widehat{C}}_{3}.$ Thus $\mathcal{C}^{\perp}=(1-u-v)\mathcal{C}_{1}^{\perp}\oplus u\mathcal{C}_{2}^{\perp}\oplus v\mathcal{C}_{3}^{\perp}.$\\
Let $\mathcal{C}$ be self-dual. Then $\mathcal{C}^{\perp}=\mathcal{C}\Rightarrow \Phi(\mathcal{C}^{\perp})=\Phi(\mathcal{C}) \Rightarrow \mathcal{C}_{1}^{\perp}\otimes \mathcal{C}_{2}^{\perp}\otimes \mathcal{C}_{3}^{\perp}=\mathcal{C}_{1}\otimes \mathcal{C}_{2}\otimes \mathcal{C}_{3}$ by Theorem 3.1. Therefore, $\mathcal{C}_{i}=\mathcal{C}_{i}^{\perp}~ \forall~ i=1,2,3,~  i.e,$  $\mathcal{C}_{1},\mathcal{C}_{2},\mathcal{C}_{3}$ are self duals.\\
 Conversely, if  $\mathcal{C}_{1},\mathcal{C}_{2},\mathcal{C}_{3}$ are self-duals, then $\mathcal{C}^{\perp}=\mathcal{C}$. Therefore, $\mathcal{C}$ is self-dual.
\end{proof}

\section{Skew constacyclic codes over $\mathcal{R}$}

Since $\mathcal{R}[x;\theta_{t}]$ is non-commutative, $\langle x^{n}-\alpha \rangle$ is not a two sided ideal of $\mathcal{R}[x;\theta_{t}]$ and hence $\mathcal{R}_{n}=\frac{\mathcal{R}[x;\theta_{t}]}{\langle x^{n}-\alpha \rangle}$ is not a ring in general. It can be checked that the center of $\mathcal{R}[x;\theta_{t}]$ is $Z(\mathcal{R}[x;\theta_{t}])=(F_{p^{t}}+uF_{p^{t}}+vF_{p^{t}})[x^{k}]$ and $x^{n}-\alpha \in Z(\mathcal{R}[x;\theta_{t}])$ if and only if $k$ divides $n.$ Therefore, $\mathcal{R}_{n}$ becomes a ring only when $k$ divides $n$. However, $\mathcal{R}_{n}$ is a left $\mathcal{R}[x;\theta_{t}]-$ module even $k$ does not divide $n$, where left module multiplication is define by $r(x)(f(x)+\langle x^{n}-\alpha \rangle) = r(x)f(x)+\langle x^{n}-\alpha \rangle$ for $f(x), r(x)\in \mathcal{R}[x;\theta_{t}]$.\\
To identify each codeword of length $n$ of $\mathcal{R}^{n}$ with a polynomial in $\mathcal{R}_{n}=\frac{\mathcal{R}[x;\theta_{t}]}{\langle x^{n}-\alpha \rangle}$, we define an $\mathcal{R}$-module isomorphism from $\mathcal{R}^{n}$ to $\mathcal{R}_{n}$ by

\begin{align*}
(r_{0},r_{1},...,r_{n-1}) \longmapsto r_{0}+r_{1}x+\dots + r_{n-1}x^{n-1}
\end{align*}
With this identification we can easily prove the following result:

\begin{thm}
 A code $\mathcal{C}$ of length $n$ over $\mathcal{R}$ is a skew $\alpha$-constacyclic code if and only if the corresponding polynomial representation in $\mathcal{R}_{n}$ is a left $\mathcal{R}[x;\theta_{t}]$-submodule of $\mathcal{R}_{n}.$
\end{thm}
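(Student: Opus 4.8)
The plan is to establish the equivalence by showing that the skew $\alpha$-constacyclic shift $\tau_{\alpha}$ on $\mathcal{R}^{n}$ corresponds, under the $\mathcal{R}$-module isomorphism $(r_{0},\dots,r_{n-1})\mapsto r_{0}+r_{1}x+\cdots+r_{n-1}x^{n-1}$, to left multiplication by $x$ in $\mathcal{R}_{n}$. Once this correspondence is in hand, invariance under $\tau_{\alpha}$ translates directly into closure under multiplication by $x$, and the submodule structure follows since $\mathcal{C}$ is already $\mathcal{R}$-linear.

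First I would take an arbitrary codeword $c=(c_{0},c_{1},\dots,c_{n-1})\in\mathcal{R}^{n}$ with polynomial image $c(x)=c_{0}+c_{1}x+\cdots+c_{n-1}x^{n-1}$, and compute $x\ast c(x)$ in the skew polynomial ring using the rule $ax^{i}\ast bx^{j}=a\theta_{t}^{i}(b)x^{i+j}$. This yields $x\ast c(x)=\theta_{t}(c_{0})x+\theta_{t}(c_{1})x^{2}+\cdots+\theta_{t}(c_{n-1})x^{n}$. Reducing modulo $\langle x^{n}-\alpha\rangle$, I would use the relation $x^{n}\equiv\alpha$ to replace the top term $\theta_{t}(c_{n-1})x^{n}$ by $\alpha\,\theta_{t}(c_{n-1})$, obtaining the polynomial $\alpha\theta_{t}(c_{n-1})+\theta_{t}(c_{0})x+\cdots+\theta_{t}(c_{n-2})x^{n-1}$. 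Reading off the coefficient vector shows this is precisely the image of $\tau_{\alpha}(c)=(\alpha\theta_{t}(c_{n-1}),\theta_{t}(c_{0}),\dots,\theta_{t}(c_{n-2}))$, which confirms that multiplication by $x$ realizes the shift $\tau_{\alpha}$.

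With this identification settled, the two implications are routine. If $\mathcal{C}$ is skew $\alpha$-constacyclic, then $\tau_{\alpha}(\mathcal{C})=\mathcal{C}$, so the polynomial image is closed under multiplication by $x$; since it is also $\mathcal{R}$-linear and closed under arbitrary left multiplication by elements of $\mathcal{R}$, it is closed under left multiplication by any polynomial $r(x)=\sum r_{i}x^{i}$ (as each $x^{i}$ acts by iterating the shift and each $r_{i}$ acts by $\mathcal{R}$-linearity), hence it is a left $\mathcal{R}[x;\theta_{t}]$-submodule of $\mathcal{R}_{n}$. Conversely, if the image is a left submodule, then in particular it is closed under multiplication by $x$, which gives $\tau_{\alpha}(\mathcal{C})\subseteq\mathcal{C}$; since $\tau_{\alpha}$ is a bijection on the finite set $\mathcal{R}^{n}$, equality $\tau_{\alpha}(\mathcal{C})=\mathcal{C}$ follows.

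The only delicate point is the care needed in the reduction step modulo $\langle x^{n}-\alpha\rangle$: because the ring is noncommutative, I must verify that the quotient $\mathcal{R}_{n}$ is well-defined as a left module (which the excerpt already grants, with left multiplication $r(x)\bigl(f(x)+\langle x^{n}-\alpha\rangle\bigr)=r(x)f(x)+\langle x^{n}-\alpha\rangle$) and that reducing $x^{n}$ to $\alpha$ is legitimate from the left, i.e.\ that $x^{n}-\alpha$ generates a proper left ideal so the coset arithmetic makes sense even when $k\nmid n$. I expect this to be the main obstacle to state cleanly, though it amounts to invoking the module structure already established; no genuine computation beyond the single shift verification is required.
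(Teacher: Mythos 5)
Your proof is correct and is exactly the standard argument the paper has in mind: the paper omits the proof entirely (stating only that ``with this identification we can easily prove the following result''), and your verification that left multiplication by $x$ in $\mathcal{R}_{n}$ realizes the shift $\tau_{\alpha}$, followed by the two routine implications, is the intended route. The one point worth keeping explicit is the final step of the converse, where closure under $x$ only gives $\tau_{\alpha}(\mathcal{C})\subseteq\mathcal{C}$ and you correctly upgrade this to equality via injectivity of $\tau_{\alpha}$ and finiteness of $\mathcal{C}$.
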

For this section, we consider $\alpha =\alpha_{1}+u\alpha_{2}+v\alpha_{3}$ be a unit in $\mathcal{R}$ where $\alpha_{i}\in F_{p^{t}}\setminus \big \{0\big \}$  such that ${\alpha}^{2} = 1$ $(i.e, \alpha= 1, -1, 1-2u, -1+2u, 1-2v, -1+2v, 1-2u-2v, -1+2u+2v)$. To discuss some results on skew $\alpha$-constacyclic codes over $\mathcal{R}$, we consider a map\\
\begin{align*}
\rho: \frac{\mathcal{R}[x;\theta_{t}]}{\langle x^{n}-1 \rangle} \rightarrow \frac{\mathcal{R}[x;\theta_{t}]}{\langle x^{n}-\alpha \rangle}
\end{align*}\\

defined by
\begin{align*}
\rho(f(x))= f(\alpha x)
\end{align*}
Let $n$ be an odd integer. If 
\begin{align*}
f(x)&=g(x) ~mod~ (x^{n}-1)\\
\Leftrightarrow f(x)-g(x)&=h(x)\ast(x^{n}-1) ~for~ some~ h(x)\in \mathcal{R}[x;\theta_{t}]\\
\Leftrightarrow f(\alpha x)-g(\alpha x) &=h(\alpha x)\ast (\alpha^{n}x^{n}-1)\\
& =h(\alpha x)\ast (\alpha x^{n}-\alpha^{2})(as~\alpha^{n}=\alpha~for~odd~n)\\
& =\alpha h(\alpha x)\ast ( x^{n}-\alpha )\\
\Leftrightarrow f(\alpha x) &= g(\alpha x) ~mod~ (x^{n}-\alpha).
\end{align*}
This shows that $\rho$ is well defined as well as one-one. Moreover, $\rho$ is onto and left $\mathcal{R}[x;\theta_{t}]$-module homomorphism. Towards above remark, we have the following results:
\begin{thm}
If $n$ is an odd integer, then $\rho$ is a left $\mathcal{R}[x;\theta_{t}]$-module isomorphism.
\end{thm}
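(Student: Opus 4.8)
The plan is to realize $\rho$ as the map induced on quotients by the substitution endomorphism $\sigma$ of $\mathcal{R}[x;\theta_{t}]$ that sends $x$ to $\alpha x$, i.e. $\sigma(\sum_{i} a_{i} x^{i}) = \sum_{i} a_{i}(\alpha x)^{i}$, and then to read off every property of $\rho$ from the corresponding property of $\sigma$. The displayed chain of equivalences preceding the statement already gives that $\rho$ is well defined and injective, so the tasks that remain are surjectivity, additivity, and compatibility with the left module action.

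First I would isolate the single computation on which everything rests. Since each $\alpha_{i}$ lies in $F_{p^{t}}$, the scalar $\alpha$ is fixed by $\theta_{t}$; hence $x\ast\alpha=\theta_{t}(\alpha)x=\alpha x$, so $\alpha$ is central in $\mathcal{R}[x;\theta_{t}]$, and $(\alpha x)^{i}=\alpha\,\theta_{t}(\alpha)\cdots\theta_{t}^{i-1}(\alpha)\,x^{i}=\alpha^{i}x^{i}$. Thus $\sigma(\sum_{i}a_{i}x^{i})=\sum_{i}a_{i}\alpha^{i}x^{i}$. A monomial check, $\sigma(ax^{i}\ast bx^{j})=a\theta_{t}^{i}(b)\alpha^{i+j}x^{i+j}=(a\alpha^{i}x^{i})\ast(b\alpha^{j}x^{j})=\sigma(ax^{i})\ast\sigma(bx^{j})$ using $\theta_{t}^{i}(\alpha^{j})=\alpha^{j}$, then shows $\sigma$ is a ring endomorphism, and additivity is clear.

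Next I would prove bijectivity at the level of $\sigma$ and push it down. Because $\alpha^{2}=1$ forces $\alpha^{2i}=1$, one gets $\sigma^{2}=\mathrm{id}$, so $\sigma$ is a ring automorphism equal to its own inverse. Moreover $\sigma(x^{n}-1)=\alpha^{n}x^{n}-1=\alpha x^{n}-1=\alpha(x^{n}-\alpha)$ for odd $n$, which is precisely the identity used in the preceding display; since $\alpha$ is a central unit this gives $\sigma(\langle x^{n}-1\rangle)=\langle x^{n}-\alpha\rangle$, and $\sigma^{2}=\mathrm{id}$ supplies the reverse inclusion. Hence $\sigma$ descends to a bijection between $\mathcal{R}[x;\theta_{t}]/\langle x^{n}-1\rangle$ and $\mathcal{R}[x;\theta_{t}]/\langle x^{n}-\alpha\rangle$, which is exactly $\rho$; in particular the inverse of $\rho$ is the substitution map in the other direction, so surjectivity is immediate, and additivity of $\rho$ is inherited from $\sigma$.

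The delicate step, and in my view the only substantive one, is the left $\mathcal{R}[x;\theta_{t}]$-module compatibility $\rho(r(x)\ast f(x))=r(x)\ast\rho(f(x))$. The subtlety is that $\sigma$ moves $x$ to $\alpha x$, so one only gets $\rho(r\ast f)=\sigma(r)\ast\sigma(f)$, and this agrees with $r\ast\rho(f)$ modulo $\langle x^{n}-\alpha\rangle$ only if the discrepancy $(\sigma(r)-r)\ast\sigma(f)$ can be absorbed by the relation $x^{n}=\alpha$. I would therefore reduce to generators $r(x)=cx^{j}$, expand both sides using centrality of $\alpha$ and $\alpha^{2}=1$, and balance the powers of $\alpha$ produced by $\sigma$ against those produced by reduction modulo $x^{n}-\alpha$. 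This bookkeeping is the crux of the statement: it is the one place where the hypotheses $\theta_{t}(\alpha)=\alpha$, $\alpha^{2}=1$ and $n$ odd must be used simultaneously, and it is here, rather than in the formal bijection, that the precise module structure under which the isomorphism holds must be pinned down.
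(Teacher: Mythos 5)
Your construction of the substitution automorphism $\sigma$ is correct and in fact does more work than the paper itself, which only records the displayed equivalence (well-definedness and injectivity) and then asserts surjectivity and the homomorphism property without argument. Your verifications check out: $\alpha$ is central because $\theta_{t}$ fixes $F_{p^{t}}$, hence $(\alpha x)^{i}=\alpha^{i}x^{i}$, $\sigma$ is a ring endomorphism, $\alpha^{2}=1$ gives $\sigma^{2}=\mathrm{id}$, and $\sigma(x^{n}-1)=\alpha(x^{n}-\alpha)$ for odd $n$ gives $\sigma(\langle x^{n}-1\rangle)=\langle x^{n}-\alpha\rangle$, so $\rho$ is a well-defined additive bijection of the quotients. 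Up to this point your route is a cleaner packaging of the paper's computation.

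The gap is the step you yourself call the crux: you describe the bookkeeping for $\rho(r\ast f)=r\ast\rho(f)$ but never perform it, and if you do perform it, it fails. Take $r=cx^{j}$ and $f$ of degree $i<n$ with $i+j=sn+m$, $0\le m<n$. Reducing $\rho(r\ast f)$ modulo $x^{n}-\alpha$ gives coefficient $\alpha^{m}$ on $x^{m}$, while reducing $r\ast\rho(f)$ gives $\alpha^{i+s}$; their ratio is $\alpha^{m-i-s}=\alpha^{j-s(n+1)}=\alpha^{j}$, since $n+1$ is even and $\alpha^{2}=1$. So $\rho(r\ast f)=\alpha^{j}\bigl(r\ast\rho(f)\bigr)$, and the discrepancy is \emph{not} absorbed by the relation $x^{n}=\alpha$: already $\rho(x\ast 1)=\alpha x$ while $x\ast\rho(1)=x$, and $(\alpha-1)x$ cannot lie in the left ideal $\langle x^{n}-\alpha\rangle$ because every nonzero element of that ideal has degree at least $n$. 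What your own identity $\rho(r\ast f)=\sigma(r)\ast\sigma(f)$ actually establishes is that $\rho$ is $\sigma$-semilinear, i.e.\ $\rho(r\ast f)=\sigma(r)\ast\rho(f)$; since $\sigma$ is an automorphism of $\mathcal{R}[x;\theta_{t}]$, this is still enough to carry left submodules bijectively onto left submodules, which is all that Corollary 4.1 uses. So the honest conclusion of your argument is the semilinear statement (or a module isomorphism after twisting the action on the target by $\sigma$), not the literal claim; as written, the proposal is incomplete because the decisive computation is promised rather than carried out, and carrying it out shows the claimed identity needs to be weakened.
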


\begin{cor}
Let $\mathcal{C}$ be a skew cyclic code of odd length $n$ over $\mathcal{R}$, then $\rho(\mathcal{C})$ is skew $\alpha$-constacyclic code of length $n$ over $\mathcal{R}$.
\end{cor}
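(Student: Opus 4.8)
The plan is to combine the polynomial/module characterisation of skew constacyclic codes (Theorem 4.1) with the isomorphism $\rho$ established in Theorem 4.2. The essential observation is that skew cyclic codes and skew $\alpha$-constacyclic codes are, respectively, exactly the left $\mathcal{R}[x;\theta_{t}]$-submodules of $\frac{\mathcal{R}[x;\theta_{t}]}{\langle x^{n}-1\rangle}$ and of $\frac{\mathcal{R}[x;\theta_{t}]}{\langle x^{n}-\alpha\rangle}$, so the statement reduces to the general principle that a module homomorphism carries submodules to submodules.

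First I would invoke Theorem 4.1 with $\alpha=1$: since $\mathcal{C}$ is a skew cyclic code of length $n$, its polynomial image is a left $\mathcal{R}[x;\theta_{t}]$-submodule of $\frac{\mathcal{R}[x;\theta_{t}]}{\langle x^{n}-1\rangle}$. I identify $\mathcal{C}$ with this submodule throughout.

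Next comes the key step: showing that $\rho(\mathcal{C})$ is closed under the left action. By Theorem 4.2, for odd $n$ the map $\rho$ is a left $\mathcal{R}[x;\theta_{t}]$-module isomorphism, so for any $r(x)\in\mathcal{R}[x;\theta_{t}]$ and any $f(x)\in\mathcal{C}$ one has $r(x)\ast\rho(f(x))=\rho(r(x)\ast f(x))$. Because $\mathcal{C}$ is a submodule, $r(x)\ast f(x)\in\mathcal{C}$, whence $r(x)\ast\rho(f(x))\in\rho(\mathcal{C})$; additive closure follows from the additivity of $\rho$. Hence $\rho(\mathcal{C})$ is a left $\mathcal{R}[x;\theta_{t}]$-submodule of $\frac{\mathcal{R}[x;\theta_{t}]}{\langle x^{n}-\alpha\rangle}$, and since $\rho$ is bijective it preserves length $n$.

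Finally, I would apply Theorem 4.1 once more, now with the unit $\alpha$, to translate the submodule $\rho(\mathcal{C})$ back into the assertion that $\rho(\mathcal{C})$ is a skew $\alpha$-constacyclic code of length $n$ over $\mathcal{R}$. There is no genuine obstacle here, since Theorem 4.2 already performs the heavy lifting (the well-definedness and bijectivity computation exploiting $\alpha^{n}=\alpha$ for odd $n$). The only point demanding care is that all homomorphism and submodule notions must be taken for \emph{left} modules, matching the non-commutativity of $\mathcal{R}[x;\theta_{t}]$; this is precisely the structure Theorem 4.2 supplies, so the corollary follows directly.
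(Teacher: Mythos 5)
Your proposal is correct and follows exactly the route the paper intends: the corollary is stated as an immediate consequence of Theorem 4.2 (the module isomorphism $\rho$) combined with the submodule characterisation of Theorem 4.1, which is precisely your argument. The paper gives no separate proof beyond this, so your write-up simply makes explicit the standard ``isomorphisms carry submodules to submodules'' step that the authors leave implicit.
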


\begin{thm}
Let $\mathcal{C}$ be a skew $\alpha$-constacyclic code of length $n$ over $\mathcal{R}$ and $gcd(n, k) = 1$. Then $\mathcal{C}$ is a $\alpha$-constacyclic code of length $n$ over $\mathcal{R}$.
\end{thm}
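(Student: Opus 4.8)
The plan is to phrase everything in terms of shift operators on $\mathcal{R}^{n}$ and to factor the skew shift into a pure shift composed with a coordinatewise Frobenius. Write $S_{\alpha}$ for the ordinary $\alpha$-constacyclic shift $S_{\alpha}(c_{0},\dots,c_{n-1})=(\alpha c_{n-1},c_{0},\dots,c_{n-2})$ and $\Theta$ for the coordinatewise automorphism $\Theta(c_{0},\dots,c_{n-1})=(\theta_{t}(c_{0}),\dots,\theta_{t}(c_{n-1}))$. The first step is the elementary observation that $\tau_{\alpha}=S_{\alpha}\circ\Theta=\Theta\circ S_{\alpha}$; the two operators commute precisely because $\alpha$ lies in the invariant subring $F_{p^{t}}+uF_{p^{t}}+vF_{p^{t}}$, so that $\theta_{t}(\alpha)=\alpha$. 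Proving that $\mathcal{C}$ is $\alpha$-constacyclic then amounts to deducing $S_{\alpha}(\mathcal{C})=\mathcal{C}$ from the hypothesis $\tau_{\alpha}(\mathcal{C})=\mathcal{C}$.

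Next I would separate out the Frobenius action. Since $S_{\alpha}$ and $\Theta$ commute, $\tau_{\alpha}^{\,n}=S_{\alpha}^{\,n}\,\Theta^{\,n}$, and a direct check shows $S_{\alpha}^{\,n}=\alpha\cdot\mathrm{id}$, because under $n$ successive shifts each coordinate travels around the cycle exactly once and so picks up a single factor of $\alpha$; hence $\tau_{\alpha}^{\,n}=\alpha\,\Theta^{\,n}$. Applying this operator to $\mathcal{C}$ gives $\alpha\,\Theta^{\,n}(\mathcal{C})=\mathcal{C}$, and since $\mathcal{C}$ is an $\mathcal{R}$-module and $\alpha$ is a unit we have $\alpha\mathcal{C}=\mathcal{C}$, so cancelling $\alpha$ yields $\Theta^{\,n}(\mathcal{C})=\mathcal{C}$. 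This is the crux of the argument: it reduces matters to controlling the pure Frobenius power $\Theta^{\,n}$, whose order is governed by $k=\mid\theta_{t}\mid$.

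Now I invoke $\gcd(n,k)=1$. Choosing $j$ with $nj\equiv 1\pmod{k}$ and using $\Theta^{\,k}=\mathrm{id}$, we get $\Theta^{\,nj}=\Theta$, so iterating $\Theta^{\,n}(\mathcal{C})=\mathcal{C}$ exactly $j$ times gives $\Theta(\mathcal{C})=\mathcal{C}$. Finally, from $\tau_{\alpha}=S_{\alpha}\Theta$ we have $S_{\alpha}=\tau_{\alpha}\Theta^{-1}$, and since $\mathcal{C}$ is invariant under both $\tau_{\alpha}$ and $\Theta^{-1}$ we conclude $S_{\alpha}(\mathcal{C})=\tau_{\alpha}(\Theta^{-1}(\mathcal{C}))=\tau_{\alpha}(\mathcal{C})=\mathcal{C}$, i.e. $\mathcal{C}$ is $\alpha$-constacyclic. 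I expect the only genuine obstacle to be spotting the factorization $\tau_{\alpha}=S_{\alpha}\Theta$ together with the identity $S_{\alpha}^{\,n}=\alpha\cdot\mathrm{id}$ and the module-theoretic cancellation of $\alpha$; once these are in place, the passage from $\Theta^{\,n}$-invariance to $\Theta$-invariance via $\gcd(n,k)=1$ is routine and parallels the classical skew-cyclic argument of Siap et al. Note that this approach never uses $\alpha^{2}=1$, only that $\alpha$ is a $\theta_{t}$-fixed unit.
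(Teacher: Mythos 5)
Your proposal is correct. It belongs to the same family as the paper's argument --- both exploit $\gcd(n,k)=1$ via a B\'ezout identity to decouple the Frobenius from the shift --- but the execution is genuinely different, and in a complementary direction. The paper works in the polynomial module $\mathcal{R}_n$ and applies $x^{Tk}\ast(-)$ with $Tk=1+Dn$: since $k\mid Tk$ the Frobenius disappears ($\theta_t^{Tk}=\mathrm{id}$), and what remains is a single constacyclic shift up to the scalar $\alpha^{D}$, which is then cancelled using the standing hypothesis $\alpha^{2}=1$ of that section. You instead raise $\tau_{\alpha}$ to the power $n$, so that the shift part trivializes ($S_{\alpha}^{\,n}=\alpha\cdot\mathrm{id}$) and what remains is the pure Frobenius $\Theta^{n}$; you then pass from $\Theta^{n}$-invariance to $\Theta$-invariance using $nj\equiv 1\pmod{k}$, and recover $S_{\alpha}=\tau_{\alpha}\Theta^{-1}$. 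Your route buys two things: it isolates the intermediate fact $\Theta(\mathcal{C})=\mathcal{C}$ (invariance of the code under the coordinatewise automorphism), which is of independent use, and it visibly dispenses with the hypothesis $\alpha^{2}=1$, needing only that $\alpha$ is a $\theta_t$-fixed unit --- the module-theoretic cancellation $\alpha\mathcal{C}=\mathcal{C}$ does the work that $\alpha^{2}=1$ does in the paper. (To be fair, the paper's own proof could be repaired the same way by multiplying by $\alpha^{-D}$ instead of $\alpha^{D}$, so the restriction there is a matter of presentation rather than substance.) All the individual steps you rely on --- the factorization $\tau_{\alpha}=S_{\alpha}\Theta=\Theta S_{\alpha}$ (valid because $\alpha_1,\alpha_2,\alpha_3\in F_{p^{t}}$, so $\theta_t(\alpha)=\alpha$), the identity $S_{\alpha}^{\,n}=\alpha\cdot\mathrm{id}$, and the passage from $\Theta^{n}$- to $\Theta$-invariance --- check out.
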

\begin{proof}
Since $gcd(n, k) = 1$, so there exits two integers $L, T$ such that $Ln+Tk = 1$. Therefore, we can choose an integer $D>0$ such that $Tk= 1+Dn$.
Let $c(x)=c_{0}+c_{1}x+\dots +c_{n-1}x^{n-1}\in \mathcal{C}$. As $\mathcal{C}$ is a skew $\alpha$-constacyclic code and $x\ast c(x)$ represents skew $\alpha$-constacyclic shift of the codeword $c(x)$, so $x\ast c(x), x^{2}\ast c(x),...,x^{Tk}\ast c(x)$ are belong to $\mathcal{C}$, where 
\begin{align*}
x^{Tk}\ast c(x)& = x^{Tk}\ast (c_{0}+c_{1}x+\dots +c_{n-1}x^{n-1})\\
& = \theta_{t}^{Tk}(c_{0})x^{Tk}+ \theta_{t}^{Tk}(c_{1})x^{Tk+1}+\dots + \theta_{t}^{Tk}(c_{n-1})x^{Tk+n-1}\\
& = c_{0}x^{1+Dn}+ c_{1}x^{2+Dn}+\dots + c_{n-1}x^{Dn+n}\\
& = \alpha^{D}(c_{0}x+c_{1}x^{2}+\dots + c_{n-2}x^{n-1}+\alpha c_{n-1})~(as~in~\mathcal{R}_{n},~ x^{n} = \alpha)\\
\implies \alpha^{D}x^{Tk}\ast c(x)& = c_{0}x+c_{1}x^{2}+\dots + c_{n-2}x^{n-1}+\alpha c_{n-1}\in \mathcal{C}~(as~ \alpha^{2} = 1)
\end{align*}
This proves that $\mathcal{C}$ is a $\alpha$-constacyclic code of length $n$ over $\mathcal{R}$.

\end{proof}

\begin{cor}
Let $gcd(n, k) = 1$. If $f(x)$ is a right divisor of $x^{n}-\alpha$ in the skew polynomial ring $\mathcal{R}[x;\theta_{t}]$, then $f(x)$ is a factor of $x^{n}-\alpha$ in the polynomial ring $\mathcal{R}[x]$.
\end{cor}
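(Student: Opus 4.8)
The plan is to realise $f(x)$ as the canonical generator of a skew $\alpha$-constacyclic code and then to feed that code into Theorem 4.3. First I would normalise: since $x^{n}-\alpha$ is monic and $x^{n}-\alpha=g(x)\ast f(x)$, comparing the degree-$n$ coefficients shows that the leading coefficient of $f(x)$ is a unit of $\mathcal{R}$; multiplying on the left by its inverse replaces $f(x)$ by a monic right divisor that generates the same left submodule and differs from $f(x)$ only by a unit scalar, so I may assume $f(x)$ is monic of degree $r$ without loss of generality (divisibility in $\mathcal{R}[x]$ is unaffected by a unit factor).

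Next I would form $\mathcal{C}=\{a(x)\ast f(x)\bmod (x^{n}-\alpha)\mid a(x)\in\mathcal{R}[x;\theta_{t}]\}$, the left $\mathcal{R}[x;\theta_{t}]$-submodule of $\mathcal{R}_{n}$ generated by $f(x)$, which by Theorem 4.1 is a skew $\alpha$-constacyclic code. The key structural fact I would establish is that every nonzero codeword of $\mathcal{C}$ has degree at least $r$. This follows from a left-division reduction: writing $a(x)=h(x)\ast g(x)+s(x)$ with $\deg s<\deg g=n-r$ (valid since $g(x)$ has unit leading coefficient), one gets $a(x)\ast f(x)\equiv s(x)\ast f(x)\pmod{x^{n}-\alpha}$ with $\deg\big(s(x)\ast f(x)\big)=\deg s+r<n$; hence every codeword equals some $s(x)\ast f(x)$ with $\deg s<n-r$, and because $f(x)$ is monic no cancellation occurs in the top term, so its degree is $\deg s+r\ge r$.

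Now I would invoke Theorem 4.3: as $\gcd(n,k)=1$, the skew $\alpha$-constacyclic code $\mathcal{C}$ is an ordinary $\alpha$-constacyclic code, i.e. an ideal of the commutative ring $S=\mathcal{R}[x]/\langle x^{n}-\alpha\rangle$ that is closed under ordinary multiplication by $x$ and by $\mathcal{R}$. Since $f(x)\in\mathcal{C}$, the entire ordinary ideal $(f(x))_{S}$ lies in $\mathcal{C}$. To finish, I would perform ordinary division in $\mathcal{R}[x]$: $x^{n}-\alpha=Q(x)f(x)+\varrho(x)$ with $\deg\varrho<r$ (possible because $f(x)$ is monic). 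Reducing modulo $x^{n}-\alpha$ gives $\varrho(x)\equiv -Q(x)f(x)$, so the image of $\varrho(x)$ in $S$ lies in $(f(x))_{S}\subseteq\mathcal{C}$; as $\deg\varrho<r\le n$, the polynomial $\varrho(x)$ is itself a codeword of degree $<r$, whence $\varrho(x)=0$ by the minimal-degree property. Therefore $x^{n}-\alpha=Q(x)f(x)$ in $\mathcal{R}[x]$, that is, $f(x)$ is a factor of $x^{n}-\alpha$ in $\mathcal{R}[x]$.

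The step I expect to be the main obstacle is the passage from the skew left-submodule to an honest commutative ideal: a priori $\mathcal{C}$ is only closed under the skew shift $x\ast(\cdot)$, and it is precisely Theorem 4.3 (resting on $\gcd(n,k)=1$, through $\theta_{t}^{Tk}=\mathrm{id}$ and the reduction $x^{Tk}\equiv\alpha^{D}x\pmod{x^{n}-\alpha}$ carried out there) that upgrades this to closure under the ordinary shift and so legitimises the commutative division argument. A secondary point requiring care is confirming that the reduced codewords $s(x)\ast f(x)$ suffer no leading-term cancellation, which is exactly what the monic normalisation of $f(x)$ guarantees.
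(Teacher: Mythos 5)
Your proposal is correct and follows the route the paper intends: the corollary is stated without proof as an immediate consequence of Theorem 4.3, and your argument (normalise $f$ to a monic right divisor, form the left submodule $\langle f(x)\rangle$, use the minimal-degree property of its codewords, upgrade to an ordinary ideal of $\mathcal{R}[x]/\langle x^{n}-\alpha\rangle$ via Theorem 4.3, and kill the remainder of the commutative division) is exactly the standard way to fill in the omitted details. The only caveat worth recording is that, like Theorem 4.3 itself, the statement is being read under the standing hypothesis of Section 4 that $\alpha^{2}=1$, which your use of that theorem implicitly assumes.
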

\begin{df}
Let $\mathcal{C}$ be a linear code of length $n$ over $\mathcal{R}$ and $n=ml.$ Then $\mathcal{R}$ is said to be an $\alpha$-quasi-twisted code if for any  
\begin{align*}
(c_{0 ~0},c_{0 ~1},...,c_{0 ~l-1},...,c_{m-1~ 0},c_{m-1 ~1},...,c_{m-1 ~l-1})\in \mathcal{C}  
\end{align*}
implies 
\begin{align*}
(\alpha c_{m-1 ~0},\alpha c_{m-1 ~1},...,\alpha c_{m-1~ l-1},...,c_{m-2 ~0 },c_{m-2 ~1},...,c_{m-2 ~l-1})\in \mathcal{C}
\end{align*}
If $l$ be the least positive integer satisfying $n=ml$, then $\mathcal{C}$ is known as $\alpha$-quasi-twisted code of length $n$ over $\mathcal{R}.$
\end{df}

\begin{thm}
Let $\mathcal{C}$ be a skew $\alpha$-constacyclic code of length $n$ and $gcd(n,k) = l.$ Then $\mathcal{C}$ is a $\alpha$-quasi-twisted code of index $l$ over $\mathcal{R}.$
\end{thm}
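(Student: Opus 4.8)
The plan is to mimic the proof of Theorem 4.3, replacing the Bézout identity $Ln+Tk=1$ (available when $\gcd(n,k)=1$) by $Ln+Tk=l$ (available when $\gcd(n,k)=l$), and then to read off that applying the skew constacyclic shift $Tk$ times produces, up to a global scalar, exactly the index-$l$ quasi-twisted shift $\sigma$ of Definition 4.1.

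First I would fix the integers. Since $l=\gcd(n,k)$ we have $l\mid n$; write $n=ml$. By Bézout there are integers $L,T$ with $Ln+Tk=l$, i.e.\ $Tk=l+Dn$ with $D=-L$. Replacing $T$ by $T+r(n/l)$ for a large positive $r$ increases $Tk$ by $rn(k/l)$, a multiple of $n$, and increases $D$ by $r(k/l)$; this leaves the residue $Tk\bmod n=l$ unchanged while making $T,D>0$ and $Tk>l$. So I may assume $Tk=l+Dn$ with $T,D>0$.

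Next, take any $c(x)=c_{0}+c_{1}x+\dots+c_{n-1}x^{n-1}\in\mathcal{C}$. Since $\mathcal{C}$ is skew $\alpha$-constacyclic, by Theorem 4.1 it is a left $\mathcal{R}[x;\theta_{t}]$-submodule, so each iterate $x^{j}\ast c(x)$ lies in $\mathcal{C}$; in particular $x^{Tk}\ast c(x)\in\mathcal{C}$. Because $\mid\theta_{t}\mid=k$ and $k\mid Tk$, we have $\theta_{t}^{Tk}=(\theta_{t}^{k})^{T}=\mathrm{id}$, so the coefficients are untouched and $x^{Tk}\ast c(x)=\sum_{i=0}^{n-1}c_{i}x^{i+Tk}$. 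Writing $i+Tk=i+l+Dn$ and reducing modulo $x^{n}-\alpha$ (here $x^{n}=\alpha$ in $\mathcal{R}_{n}$, and $\alpha$, lying in the fixed subring $F_{p^{t}}+uF_{p^{t}}+vF_{p^{t}}$, commutes with $x$), each $x^{i+l+Dn}$ reduces to $\alpha^{D+\epsilon_{i}}x^{(i+l)\bmod n}$, where $\epsilon_{i}=1$ exactly for the wrapping indices $i\in\{n-l,\dots,n-1\}$, i.e.\ the last block. Collecting terms yields $x^{Tk}\ast c(x)=\alpha^{D}\,\sigma(c)$, where $\sigma$ is precisely the $\alpha$-quasi-twisted shift of index $l$ of Definition 4.1: the last block $(c_{n-l},\dots,c_{n-1})$ moves to the front gaining a factor $\alpha$, and the remaining blocks shift over.

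Finally, since $\alpha^{2}=1$ we have $\alpha^{-D}=\alpha^{D}$, so $\sigma(c)=\alpha^{D}\,(x^{Tk}\ast c(x))$; this is the scalar multiple by the unit $\alpha^{D}\in\mathcal{R}$ of an element of $\mathcal{C}$, hence lies in $\mathcal{C}$ by $\mathcal{R}$-linearity. Thus $\sigma(\mathcal{C})\subseteq\mathcal{C}$, which is exactly the statement that $\mathcal{C}$ is an $\alpha$-quasi-twisted code of index $l$. I expect the one delicate point to be the exponent bookkeeping in the third paragraph: correctly splitting the sum into non-wrapping indices and last-block indices so that the extra factor $\alpha$ appears on precisely the wrapped block, and recognizing that the uniform leftover factor $\alpha^{D}$ is harmless because the code is closed under multiplication by the unit $\alpha$.
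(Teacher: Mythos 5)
Your proof is correct and follows essentially the same route as the paper's: both use $\gcd(n,k)=l$ to write $Tk=l+Dn$ with $T,D>0$, apply the skew shift $Tk$ times so that $\theta_{t}^{Tk}=\mathrm{id}$, observe that the result is $\alpha^{D}$ times the index-$l$ quasi-twisted shift, and absorb the factor $\alpha^{D}$ using the standing assumption $\alpha^{2}=1$. Your write-up is in fact somewhat more careful than the paper's about why $T$ and $D$ may be taken positive and about the bookkeeping of the powers of $\alpha$ on the wrapped block.
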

\begin{proof}
  Since $gcd(n,k) = l$, there exits two integers $T$ and $D$ such that $Tk = l + Dn ; D>0.$
Let $r = (c_{0 ~0},c_{0~ 1},...,c_{0 ~l-1},...,c_{m-1~ 0},c_{m-1~ 1},...,c_{m-1~ l-1})\in \mathcal{C}$. Since $\mathcal{C}$ is a skew $\alpha$-constacyclic code, $\tau_{\alpha}(r), \tau_{\alpha}^{2}(r),...,\tau_{\alpha}^{l}(r)$ are belong to $\mathcal{C}$, where
\begin{align*}
\tau_{\alpha}^{l}(r)&= (\theta_{t}^{l}(\alpha c_{m-1~ 0}),...,\theta_{t}^{l}(\alpha c_{m-1~ l-1}),\theta_{t}^{l}(c_{0~ 0}),...\theta_{t}^{l}(c_{0 ~l-1}),...,\\& ~~~~\theta_{t}^{l}(c_{m-2~ 0}),...,\theta_{t}^{l}(c_{m-2 ~l-1}))\\
\implies \tau_{\alpha}^{l+Dn}(r)&=(\theta_{t}^{l+Dn}(c_{m-1~ 0}),...,\theta_{t}^{l+Dn}(c_{m-1~ l-1}),...,\\& ~~~~\theta_{t}^{l+Dn}(\alpha c_{m-2~ 0}),...,\theta_{t}^{l+Dn}(\alpha c_{m-2~ l-1}))\\
&= (\theta_{t}^{Tk}(c_{m-1 ~0}),...,\theta_{t}^{Tk}(c_{m-1 ~l-1}),.\\&~~~~..,\theta_{t}^{Tk}(\alpha c_{m-2~ 0}),...,\theta_{t}^{Tk}(\alpha c_{m-2~ l-1}))\\
&= (c_{m-1~ 0},...,c_{m-1~ l-1},...,\alpha c_{m-2~ 0},...,\alpha c_{m-2~ l-1})\\
\implies \alpha \tau_{\alpha}^{l+Dn}(r)&= (\alpha c_{m-1~ 0},...,\alpha c_{m-1~ l-1},...,c_{m-2 ~0},...,c_{m-2 ~l-1})\in \mathcal{C}~(as~ \alpha^{2} = 1)
\end{align*}
This proves that $\mathcal{C}$ is an $\alpha$-quasi-twisted code of index $l$.

\end{proof}

\section{Decomposition of skew constacyclic codes over $\mathcal{R}$}

In previous section, we discussed skew $\alpha$-constacyclic code assuming $\alpha^{2}=1$. In present section relaxing the above restriction we discuss some structural properties of skew $(\alpha_{1}+u\alpha_{2}+v\alpha_{3})$-constacyclic codes over $\mathcal{R}$ by the decomposition method.

\begin{lem}
An element $\alpha =\alpha_{1}+u\alpha_{2}+v\alpha_{3}\in \mathcal{R}$ where $\alpha_{i}\in F_{q}$ for $i=1,2,3$ is unit in $\mathcal{R}$ if and only if $\alpha_{1}, \alpha_{1}+\alpha_{2}, \alpha_{1}+\alpha_{3}$ are units in $F_{q}.$
\end{lem}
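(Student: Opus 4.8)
The plan is to exploit the idempotent structure of $\mathcal{R}$ that is already implicit in the decomposition used throughout Section 3. First I would record that the three elements $e_1 = 1-u-v$, $e_2 = u$ and $e_3 = v$ form a complete set of pairwise orthogonal idempotents: a short check using $u^2=u$, $v^2=v$ and $uv=vu=0$ gives $e_i^2 = e_i$, $e_ie_j = 0$ for $i\neq j$, and $e_1+e_2+e_3 = 1$. Consequently the map $\mathcal{R}\to F_q\times F_q\times F_q$ sending $a+ub+vc$ to $(a,\,a+b,\,a+c)$ is a ring isomorphism; this is exactly the componentwise reading of the identity $a+ub+vc = (1-u-v)a + u(a+b) + v(a+c)$ noted before Theorem 3.1.

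Second, under this isomorphism the element $\alpha = \alpha_1 + u\alpha_2 + v\alpha_3$ corresponds to the triple $(\alpha_1,\ \alpha_1+\alpha_2,\ \alpha_1+\alpha_3)$. Since an element of a finite product of fields is invertible if and only if each of its coordinates is nonzero (equivalently a unit), $\alpha$ is a unit in $\mathcal{R}$ precisely when $\alpha_1$, $\alpha_1+\alpha_2$ and $\alpha_1+\alpha_3$ are all units in $F_q$, which is the assertion.

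If one prefers to avoid invoking the isomorphism, the same conclusion falls out of a direct computation, and since $\mathcal{R}$ is commutative a one-sided inverse suffices. Writing a candidate inverse $\beta = \beta_1 + u\beta_2 + v\beta_3$ and expanding with the relations above yields $\alpha\beta = \alpha_1\beta_1 + u(\alpha_1\beta_2 + \alpha_2\beta_1 + \alpha_2\beta_2) + v(\alpha_1\beta_3 + \alpha_3\beta_1 + \alpha_3\beta_3)$. Setting this equal to $1$ forces $\alpha_1\beta_1 = 1$, $(\alpha_1+\alpha_2)\beta_2 = -\alpha_2\beta_1$ and $(\alpha_1+\alpha_3)\beta_3 = -\alpha_3\beta_1$; these three equations are solvable for $\beta_1,\beta_2,\beta_3$ if and only if $\alpha_1$, $\alpha_1+\alpha_2$ and $\alpha_1+\alpha_3$ are invertible in $F_q$, giving both directions at once.

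The argument is essentially bookkeeping, so I do not expect a deep obstacle; the only point requiring care is the verification that $e_1,e_2,e_3$ really are orthogonal idempotents summing to $1$ (or, in the direct version, correctly collecting the $u$- and $v$-coefficients, using that the cross terms $uv$ and $vu$ vanish), since a sign slip or a stray cross term there would corrupt the two defining equations for $\beta_2$ and $\beta_3$ and spoil the characterization.
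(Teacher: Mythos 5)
Your proposal is correct and essentially matches the paper's proof: the ``direct computation'' version reproduces exactly the paper's three equations $\alpha_1\beta_1=1$, $\alpha_1\beta_2+\alpha_2\beta_2+\alpha_2\beta_1=0$, $\alpha_1\beta_3+\alpha_3\beta_3+\alpha_3\beta_1=0$, and the explicit inverse $(1-u-v)\alpha_1^{-1}+u(\alpha_1+\alpha_2)^{-1}+v(\alpha_1+\alpha_3)^{-1}$ is precisely the paper's converse step, which is just your orthogonal-idempotent decomposition written out. The only (shared) point worth making explicit is why $\alpha_1+\alpha_2\neq 0$ follows in the forward direction: if $\alpha_1+\alpha_2=0$ then $(\alpha_1+\alpha_2)\beta_2=0=-\alpha_2\beta_1$ with $\beta_1$ invertible forces $\alpha_2=0$ and hence $\alpha_1=0$, a contradiction.
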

\begin{proof}
Let $\alpha =\alpha_{1}+u\alpha_{2}+v\alpha_{3}\in \mathcal{R}$ be a unit in $\mathcal{R}$. Then there exits an element $\beta=\beta_{1}+u\beta_{2}+v\beta_{3}\in \mathcal{R}$ such that $\alpha.\beta=1.$ So $\alpha_{1}\beta_{1}=1, \alpha_{1}\beta_{2}+\alpha_{2}\beta_{2}+\alpha_{2}\beta_{1}=0, \alpha_{1}\beta_{3}+\alpha_{3}\beta_{3}+\alpha_{3}\beta_{1}=0$ which implies that $\alpha_{1}\neq 0, \alpha_{1}+\alpha_{2}\neq 0, \alpha_{1}+\alpha_{3}\neq 0.$ \\
Conversely, let  $\alpha_{1}\neq 0, \alpha_{1}+\alpha_{2}\neq 0, \alpha_{1}+\alpha_{3}\neq 0.$ Let $\alpha^{-1}=(1-u-v)\alpha_{1}^{-1}+u(\alpha_{1}+\alpha_{2})^{-1}+v(\alpha_{1}+\alpha_{3})^{-1}.$ Then $\alpha.\alpha^{-1}=1.$ Hence $\alpha =\alpha_{1}+u\alpha_{2}+v\alpha_{3}\in \mathcal{R}$
is a unit.\\
\end{proof}

\begin{thm}
A linear code $\mathcal{C}=(1-u-v)\mathcal{C}_{1}\oplus u\mathcal{C}_{2}\oplus v\mathcal{C}_{3}$ is a skew $(\alpha_{1}+u\alpha_{2}+v\alpha_{3})$-constacyclic code of length $n$ over $\mathcal{R}$ with respect to the automorphism $\theta_{t}$ if and only if $\mathcal{C}_{1}, \mathcal{C}_{2}, \mathcal{C}_{3}$ are skew  $\alpha_{1}$-constacyclic code, skew $(\alpha_{1}+\alpha_{2})$-constacyclic code and skew $(\alpha_{1}+\alpha_{3})$-constacyclic code of length $n$ over $F_{q}$ respectively with respect to the same automorphism.
\end{thm}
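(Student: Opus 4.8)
The plan is to exploit the fact that $1-u-v$, $u$, $v$ are pairwise orthogonal idempotents of $\mathcal{R}$ that sum to $1$ and are each fixed by $\theta_t$ (their $F_q$-coefficients lie in the prime field, so they are unchanged by raising to the power $p^t$). This yields the coordinatewise identification $a+ub+vc\mapsto (a,\,a+b,\,a+c)$, which is exactly the isomorphism underlying the decomposition $\mathcal{C}=(1-u-v)\mathcal{C}_1\oplus u\mathcal{C}_2\oplus v\mathcal{C}_3$. So I would first prove the single structural fact that the shift $\tau_\alpha$ on $\mathcal{R}^n$ corresponds, under projection onto the three copies of $F_q^n$, to the triple of shifts $\tau_{\alpha_1}$, $\tau_{\alpha_1+\alpha_2}$, $\tau_{\alpha_1+\alpha_3}$; the whole biconditional then reads off from this.

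For the computation, take $r=(r_1,\dots,r_n)\in\mathcal{R}^n$ with $r_i=a_i+ub_i+vc_i$ and $a_i,b_i,c_i\in F_q$, so that its three projections are $(a_i)_i$, $(a_i+b_i)_i$, $(a_i+c_i)_i$. Because $\theta_t$ fixes $u$ and $v$ and acts as $\theta_t$ on each $F_q$-coordinate, the entries $\theta_t(r_1),\dots,\theta_t(r_{n-1})$ of $\tau_\alpha(r)$ project to $\theta_t(a_i)$, $\theta_t(a_i+b_i)$, $\theta_t(a_i+c_i)$, which is routine. The only non-trivial entry is the leading one, $\alpha\theta_t(r_n)$. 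Expanding $(\alpha_1+u\alpha_2+v\alpha_3)\,\theta_t(r_n)$ with the relations $u^2=u$, $v^2=v$, $uv=vu=0$ and then applying the three projections, the components collapse to $\alpha_1\theta_t(a_n)$, $(\alpha_1+\alpha_2)\theta_t(a_n+b_n)$, and $(\alpha_1+\alpha_3)\theta_t(a_n+c_n)$. Comparing with the definition of the shift, this says precisely that the projections of $\tau_\alpha(r)$ are $\tau_{\alpha_1}$, $\tau_{\alpha_1+\alpha_2}$, $\tau_{\alpha_1+\alpha_3}$ applied to the corresponding projections of $r$. This bookkeeping is the heart of the argument and the one step I would write out in full.

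With this correspondence in hand the biconditional is short. For the forward direction, assume $\tau_\alpha(\mathcal{C})=\mathcal{C}$; projecting onto the first coordinate gives $\tau_{\alpha_1}(\mathcal{C}_1)\subseteq\mathcal{C}_1$, and similarly $\tau_{\alpha_1+\alpha_2}(\mathcal{C}_2)\subseteq\mathcal{C}_2$ and $\tau_{\alpha_1+\alpha_3}(\mathcal{C}_3)\subseteq\mathcal{C}_3$. By Lemma 5.1 the unit $\alpha$ forces $\alpha_1,\alpha_1+\alpha_2,\alpha_1+\alpha_3$ to be units in $F_q$, so each of these shifts is a bijection of the finite set $F_q^n$; hence each inclusion is an equality and $\mathcal{C}_1,\mathcal{C}_2,\mathcal{C}_3$ are the stated skew constacyclic codes. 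For the converse, if each $\mathcal{C}_i$ is skew constacyclic for the corresponding constant, then for any $c\in\mathcal{C}$ the element $\tau_\alpha(c)$ projects into $\mathcal{C}_1,\mathcal{C}_2,\mathcal{C}_3$ respectively by the displayed correspondence, so $\tau_\alpha(c)\in(1-u-v)\mathcal{C}_1\oplus u\mathcal{C}_2\oplus v\mathcal{C}_3=\mathcal{C}$; thus $\tau_\alpha(\mathcal{C})\subseteq\mathcal{C}$, and equality follows once more from finiteness and the bijectivity of $\tau_\alpha$.

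The main obstacle is genuinely just the leading-entry computation in the second paragraph: one must check that multiplication by $\alpha$ does not mix the three idempotent components---which is guaranteed by $uv=vu=0$, $u^2=u$, $v^2=v$---and that on the second and third projections it acts as scaling by $\alpha_1+\alpha_2$ and $\alpha_1+\alpha_3$ (and not by $\alpha_2$ and $\alpha_3$). Everything else is a formal consequence of the component decomposition together with the cardinality argument that upgrades the inclusions to equalities.
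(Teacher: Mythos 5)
Your proposal is correct and follows essentially the same route as the paper: both hinge on the single identity $\tau_{\alpha_{1}+u\alpha_{2}+v\alpha_{3}}(s)=(1-u-v)\tau_{\alpha_{1}}(a)+u\tau_{\alpha_{1}+\alpha_{2}}(b)+v\tau_{\alpha_{1}+\alpha_{3}}(c)$ coming from the orthogonal idempotent decomposition, and then read the biconditional off componentwise. Your added care (invoking Lemma 5.1 for the units and using finiteness to upgrade the inclusions $\tau(\mathcal{C}_i)\subseteq\mathcal{C}_i$ to equalities) is a minor refinement the paper leaves implicit, not a different argument.
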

\begin{proof}
Let $s=(s_{1},s_{2},...,s_{n})\in \mathcal{R}$ where $s_{i}=(1-u-v)a_{i}+ub_{i}+vc_{i}$ for $i=1,2,...,n$. Consider $a=(a_{1},a_{2},...,a_{n}), b=(b_{1},b_{2},...,b_{n}), c=(c_{1},c_{2},...,c_{n}).$ Then $a\in \mathcal{C}_{1}, b\in \mathcal{C}_{2}, c\in \mathcal{C}_{3}.$ Suppose $\mathcal{C}_{1}, \mathcal{C}_{2}$ and $\mathcal{C}_{3}$ are skew  $\alpha_{1}$-constacyclic code, skew $(\alpha_{1}+\alpha_{2})$-constacyclic code and skew $(\alpha_{1}+\alpha_{3})$-constacyclic code of length $n$ over $F_{q}$ respectively. Then $\tau_{\alpha_{1}}(a)\in \mathcal{C}_{1}, \tau_{\alpha_{1}+\alpha_{2}}(b)\in \mathcal{C}_{2}$ and $\tau_{\alpha_{1}+\alpha_{3}}(c)\in \mathcal{C}_{3}.$ Now, $\tau_{\alpha_{1}+u\alpha_{2}+v\alpha_{3}}(s)=(1-u-v)\tau_{\alpha_{1}}(a)+u\tau_{\alpha_{1}+\alpha_{2}}(b)+v\tau_{\alpha_{1}+\alpha_{3}}(c)\in (1-u-v)\mathcal{C}_{1}\oplus u\mathcal{C}_{2}\oplus v\mathcal{C}_{3}=\mathcal{C}.$ Therefore, $\mathcal{C}$ is a skew $(\alpha_{1}+u\alpha_{2}+v\alpha_{3})$-constacyclic code of length $n$ over $\mathcal{R}$.\\
Conversely, let $\mathcal{C}$ be a skew $(\alpha_{1}+u\alpha_{2}+v\alpha_{3})$-constacyclic code of length $n$ over $\mathcal{R}$. Let $a=(a_{1},a_{2},...,a_{n})\in \mathcal{C}_{1}, b=(b_{1},b_{2},...,b_{n})\in \mathcal{C}_{2}, c=(c_{1},c_{2},...,c_{n})\in \mathcal{C}_{3}.$ Take $s_{i}=(1-u-v)a_{i}+ub_{i}+vc_{i}$ for $i=1,2,...,n$. Then $s=(s_{1}s_{2},...,s_{n})\in \mathcal{C}.$ Since  $\mathcal{C}$ is a skew $(\alpha_{1}+u\alpha_{2}+v\alpha_{3})$-constacyclic code over $\mathcal{R}$, so $\tau_{\alpha_{1}+u\alpha_{2}+v\alpha_{3}}(s)\in \mathcal{C}$ where $\tau_{\alpha_{1}+u\alpha_{2}+v\alpha_{3}}(s)=(1-u-v)\tau_{\alpha_{1}}(a)+u\tau_{\alpha_{1}+\alpha_{2}}(b)+v\tau_{\alpha_{1}+\alpha_{3}}(c)$. It follows that $\tau_{\alpha_{1}}(a)\in \mathcal{C}_{1}, \tau_{\alpha_{1}+\alpha_{2}}(b)\in \mathcal{C}_{2}$ and $\tau_{\alpha_{1}+\alpha_{3}}(c)\in \mathcal{C}_{3}$. Consequently, $\mathcal{C}_{1}, \mathcal{C}_{2}, \mathcal{C}_{3}$ are skew  $\alpha_{1}$-constacyclic code, skew $(\alpha_{1}+\alpha_{2})$-constacyclic code and skew $(\alpha_{1}+\alpha_{3})$-constacyclic code of length $n$ over $F_{q}$ respectively.
\end{proof}

\begin{cor}
Let $\mathcal{C}=(1-u-v)\mathcal{C}_{1}\oplus u\mathcal{C}_{2}\oplus v\mathcal{C}_{3}$ be a skew $(\alpha_{1}+u\alpha_{2}+v\alpha_{3})$-constacyclic code of length $n$ over $\mathcal{R}$. Then dual $\mathcal{C}^{\perp}=(1-u-v)\mathcal{C}_{1}^{\perp}\oplus u\mathcal{C}_{2}^{\perp}\oplus v\mathcal{C}_{3}^{\perp}$ is a skew $(\alpha_{1}+u\alpha_{2}+v\alpha_{3})^{-1}$-constacyclic code over $\mathcal{R}$ where $\mathcal{C}_{1}^{\perp}, \mathcal{C}_{2}^{\perp}$ and $\mathcal{C}_{3}^{\perp}$ are skew $\alpha_{1}^{-1}$-constacyclic, skew $(\alpha_{1}+\alpha_{2})^{-1}$-constacyclic and skew $(\alpha_{1}+\alpha_{3})^{-1}$-constacyclic codes of length $n$ over ${F}_{q}$ provided $n$ is a multiple of the order $k$ of the automorphism $\theta_{t}.$
\end{cor}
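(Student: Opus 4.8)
The plan is to assemble the corollary from two structural facts already in hand together with one field-level input. Theorem 3.3 supplies the decomposition $\mathcal{C}^{\perp}=(1-u-v)\mathcal{C}_{1}^{\perp}\oplus u\mathcal{C}_{2}^{\perp}\oplus v\mathcal{C}_{3}^{\perp}$, and Theorem 5.1 reduces skew constacyclicity of a decomposed code to skew constacyclicity of its three $F_{q}$-components. The missing ingredient is a duality statement over the field: the Euclidean dual of a skew $\lambda$-constacyclic code over $F_{q}$ is skew $\lambda^{-1}$-constacyclic. Granting that, the proof becomes a matter of checking that the three dual constants are exactly the components of $\alpha^{-1}$.

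First I would isolate and prove the field-level lemma, which is where the real work lies. Let $\mathcal{D}$ be a skew $\lambda$-constacyclic code of length $n$ over $F_{q}$ and let $x=(x_{1},\dots,x_{n})\in\mathcal{D}^{\perp}$. For every $c=(c_{1},\dots,c_{n})\in\mathcal{D}$ we also have $\tau_{\lambda}(c)\in\mathcal{D}$, so $\langle\tau_{\lambda}(c),x\rangle=\lambda\theta_{t}(c_{n})x_{1}+\sum_{j=1}^{n-1}\theta_{t}(c_{j})x_{j+1}=0$. Applying the field automorphism $\theta_{t}^{-1}$ to this scalar identity, and using that $\theta_{t}^{-1}$ is multiplicative, gives $\sum_{j=1}^{n-1}c_{j}\theta_{t}^{-1}(x_{j+1})+\theta_{t}^{-1}(\lambda)\,c_{n}\,\theta_{t}^{-1}(x_{1})=0$, that is $\langle c,\,y\rangle=0$ with $y=(\theta_{t}^{-1}(x_{2}),\dots,\theta_{t}^{-1}(x_{n}),\theta_{t}^{-1}(\lambda)\theta_{t}^{-1}(x_{1}))$. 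As this holds for all $c\in\mathcal{D}$, we get $y\in\mathcal{D}^{\perp}$, and a direct inversion of the shift formula shows $y=\tau_{\lambda^{-1}}^{-1}(x)$. Thus $\mathcal{D}^{\perp}$ is stable under the bijection $\tau_{\lambda^{-1}}^{-1}$, hence under $\tau_{\lambda^{-1}}$ itself, so $\mathcal{D}^{\perp}$ is skew $\lambda^{-1}$-constacyclic. The delicate point I expect here is the index bookkeeping and the precise placement of $\theta_{t}^{-1}(\lambda)$, so that $y$ matches $\tau_{\lambda^{-1}}^{-1}(x)$ exactly rather than some other twisted shift; note that under the standing convention $\alpha_{i}\in F_{p^{t}}$ each component constant is fixed by $\theta_{t}$, which both simplifies this step and makes $x^{n}-\alpha$ central.

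Finally I would combine everything. Since $\mathcal{C}$ is skew $(\alpha_{1}+u\alpha_{2}+v\alpha_{3})$-constacyclic, Theorem 5.1 makes $\mathcal{C}_{1},\mathcal{C}_{2},\mathcal{C}_{3}$ skew $\alpha_{1}$-, $(\alpha_{1}+\alpha_{2})$-, $(\alpha_{1}+\alpha_{3})$-constacyclic respectively; the lemma then renders $\mathcal{C}_{1}^{\perp},\mathcal{C}_{2}^{\perp},\mathcal{C}_{3}^{\perp}$ skew $\alpha_{1}^{-1}$-, $(\alpha_{1}+\alpha_{2})^{-1}$-, $(\alpha_{1}+\alpha_{3})^{-1}$-constacyclic. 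By Lemma 5.1 one has $(\alpha_{1}+u\alpha_{2}+v\alpha_{3})^{-1}=(1-u-v)\alpha_{1}^{-1}+u(\alpha_{1}+\alpha_{2})^{-1}+v(\alpha_{1}+\alpha_{3})^{-1}$, so writing this inverse as $\beta_{1}+u\beta_{2}+v\beta_{3}$ yields $\beta_{1}=\alpha_{1}^{-1}$, $\beta_{1}+\beta_{2}=(\alpha_{1}+\alpha_{2})^{-1}$ and $\beta_{1}+\beta_{3}=(\alpha_{1}+\alpha_{3})^{-1}$, which are precisely the three constants just found for the dual components. Feeding this into the reverse direction of Theorem 5.1 applied to $\mathcal{C}^{\perp}=(1-u-v)\mathcal{C}_{1}^{\perp}\oplus u\mathcal{C}_{2}^{\perp}\oplus v\mathcal{C}_{3}^{\perp}$ gives that $\mathcal{C}^{\perp}$ is skew $(\alpha_{1}+u\alpha_{2}+v\alpha_{3})^{-1}$-constacyclic. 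The hypothesis that the order $k$ divides $n$ is used to guarantee that $x^{n}-\alpha$ lies in the centre of $\mathcal{R}[x;\theta_{t}]$, so that $\mathcal{R}_{n}$ and its field analogues are genuine rings and the codes above coincide with left ideals, keeping the whole argument inside the ideal-theoretic framework of Theorem 4.1.
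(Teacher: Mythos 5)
Your proof is correct, but it runs the argument in the opposite direction from the paper and replaces the paper's external citation with a self-contained computation. The paper first invokes Lemma 3.1 of Jitman--Ling--Udomkavanich (using that $\alpha$ is fixed by $\theta_{t}$ and that $k\mid n$) to conclude at the ring level that $\mathcal{C}^{\perp}$ is skew $\alpha^{-1}$-constacyclic over $\mathcal{R}$, then computes $\alpha^{-1}=(1-u-v)\alpha_{1}^{-1}+u(\alpha_{1}+\alpha_{2})^{-1}+v(\alpha_{1}+\alpha_{3})^{-1}$ and descends through Theorem 5.1 to read off the constants of $\mathcal{C}_{1}^{\perp},\mathcal{C}_{2}^{\perp},\mathcal{C}_{3}^{\perp}$. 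You instead prove the key duality fact at the field level --- that the Euclidean dual of a skew $\lambda$-constacyclic code over $F_{q}$ is skew $\lambda^{-1}$-constacyclic, via the implication $\langle\tau_{\lambda}(c),x\rangle=0\Rightarrow\langle c,\tau_{\lambda^{-1}}^{-1}(x)\rangle=0$ plus the observation that a finite set stable under an injection is stable under its inverse --- and then ascend through Theorem 3.3 and the reverse direction of Theorem 5.1. Your index bookkeeping checks out: with $y=(\theta_{t}^{-1}(x_{2}),\dots,\theta_{t}^{-1}(x_{n}),\theta_{t}^{-1}(\lambda)\theta_{t}^{-1}(x_{1}))$ one indeed gets $\tau_{\lambda^{-1}}(y)=x$, so the lemma is sound. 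What your route buys is independence from the cited chain-ring result and, in effect, a slightly stronger statement: nothing in your shift-level argument actually uses $k\mid n$, and your closing remark correctly identifies that this hypothesis enters only to make $x^{n}-\alpha$ central so that the ideal-theoretic framework applies --- which your combinatorial argument never needs. What the paper's route buys is brevity and alignment with the module-theoretic setup it inherits from the finite chain ring literature.
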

\begin{proof}
Since $(\alpha_{1}+u\alpha_{2}+v\alpha_{3})$ is fixed by the automorphism $\theta_{t}$ and $n$ is a multiple of $k$, so by Lemma 3.1 of \cite{jitman}, we have $\mathcal{C}^{\perp}$ is a skew $(\alpha_{1}+u\alpha_{2}+v\alpha_{3})^{-1}$-constacyclic code over $\mathcal{R}$. Now, $\alpha_{1}+u\alpha_{2}+v\alpha_{3}=(1-u-v)\alpha_{1}+u(\alpha_{1}+\alpha_{2})+v(\alpha_{1}+\alpha_{3})$, it follows that $(\alpha_{1}+u\alpha_{2}+v\alpha_{3})^{-1}=(1-u-v)\alpha_{1}^{-1}+u(\alpha_{1}+\alpha_{2})^{-1}+v(\alpha_{1}+\alpha_{3})^{-1}.$ Then by Theorem 5.1, we have $\mathcal{C}_{1}^{\perp}, \mathcal{C}_{2}^{\perp}$ and $\mathcal{C}_{3}^{\perp}$ are skew $\alpha_{1}^{-1}$-constacyclic, skew $(\alpha_{1}+\alpha_{2})^{-1}$-constacyclic and skew $(\alpha_{1}+\alpha_{3})^{-1}$-constacyclic codes of length $n$ over ${F}_{q}$.
\end{proof}

\begin{cor}
Let $\mathcal{C}=(1-u-v)\mathcal{C}_{1}\oplus u\mathcal{C}_{2}\oplus v\mathcal{C}_{3}$ be a skew $(\alpha_{1}+u\alpha_{2}+v\alpha_{3})$-constacyclic code of length $n$ over $\mathcal{R}$. Then $\mathcal{C}$ is a self-dual if and only if $(\alpha_{1}+u\alpha_{2}+v\alpha_{3}) = 1, -1, 1-2u, -1+2u, 1-2v, -1+2v, 1-2u-2v, -1+2u+2v.$
\end{cor}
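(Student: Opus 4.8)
The plan is to reduce the whole question to the three constituent codes $\mathcal{C}_{1},\mathcal{C}_{2},\mathcal{C}_{3}$ over $F_{q}$ and to combine the two decomposition results already at hand. By Theorem 3.3, $\mathcal{C}$ is self-dual if and only if $\mathcal{C}_{1},\mathcal{C}_{2},\mathcal{C}_{3}$ are all self-dual over $F_{q}$, while by Theorem 5.1 these three codes are skew $\alpha_{1}$-, skew $(\alpha_{1}+\alpha_{2})$- and skew $(\alpha_{1}+\alpha_{3})$-constacyclic respectively. Hence the corollary follows once I determine, for each scalar $\beta\in\{\alpha_{1},\,\alpha_{1}+\alpha_{2},\,\alpha_{1}+\alpha_{3}\}$, when a skew $\beta$-constacyclic code over $F_{q}$ can be self-dual, and then reassemble the admissible triples into values of $\alpha=(1-u-v)\alpha_{1}+u(\alpha_{1}+\alpha_{2})+v(\alpha_{1}+\alpha_{3})$.

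For the necessity direction I argue one component at a time. Fix $\beta$ and suppose the skew $\beta$-constacyclic code $\mathcal{C}_{i}$ is self-dual, so $\mathcal{C}_{i}=\mathcal{C}_{i}^{\perp}$. By the dual computation underlying Corollary 5.1 (the dual of a skew $\beta$-constacyclic code is skew $\beta^{-1}$-constacyclic, valid since $n$ is a multiple of the order $k$), the code $\mathcal{C}_{i}$ is invariant under both $\tau_{\beta}$ and $\tau_{\beta^{-1}}$. The central step, which I expect to be the main obstacle, is to show this forces $\beta=\beta^{-1}$. I would do this by noting that for any codeword $c=(c_{0},\dots,c_{n-1})$ the two shifts differ only in the first coordinate,
\begin{align*}
\tau_{\beta}(c)-\tau_{\beta^{-1}}(c)=\big((\beta-\beta^{-1})\theta_{t}(c_{n-1}),0,\dots,0\big)\in\mathcal{C}_{i}.
\end{align*}
If $\beta\neq\beta^{-1}$ and some codeword had $c_{n-1}\neq 0$, then $\mathcal{C}_{i}$ would contain a nonzero vector supported on a single coordinate; repeatedly applying $\tau_{\beta}$ (and using that $\theta_{t}$ is an automorphism, hence never annihilates a nonzero scalar) would yield $n$ independent such vectors, so $\mathcal{C}_{i}=F_{q}^{n}$, impossible for a self-dual code of dimension $n/2$. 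To close the argument I must rule out the degenerate possibility that every codeword has $c_{n-1}=0$: a self-dual code is nonzero for $n>0$, and $\tau_{\beta}$-invariance propagates a vanishing last coordinate to all coordinates (each further shift forces the next coordinate to vanish), which would give $\mathcal{C}_{i}=\{0\}$, a contradiction. Therefore $\beta=\beta^{-1}$, i.e. $\beta^{2}=1$, so $\beta=\pm 1$ in the field $F_{q}$ of odd characteristic. Applying this to $\beta=\alpha_{1},\ \alpha_{1}+\alpha_{2},\ \alpha_{1}+\alpha_{3}$ gives $\alpha_{1},\ \alpha_{1}+\alpha_{2},\ \alpha_{1}+\alpha_{3}\in\{1,-1\}$.

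It then remains to enumerate. Substituting each of the $2^{3}=8$ sign choices into $\alpha=(1-u-v)\alpha_{1}+u(\alpha_{1}+\alpha_{2})+v(\alpha_{1}+\alpha_{3})$ and simplifying with $u^{2}=u,\ v^{2}=v,\ uv=0$ produces precisely the eight units $1,\,-1,\,1-2u,\,-1+2u,\,1-2v,\,-1+2v,\,1-2u-2v,\,-1+2u+2v$; these are exactly the solutions of $\alpha^{2}=1$ in $\mathcal{R}$ with $\alpha_{i}\in F_{p^{t}}$, agreeing with the list recorded in Section 4. For the converse, any such $\alpha$ satisfies $\alpha^{2}=1$, hence $\alpha^{-1}=\alpha$, so by Corollary 5.1 the dual $\mathcal{C}^{\perp}$ is again skew $\alpha$-constacyclic of the same type as $\mathcal{C}$; since each of $\alpha_{1},\,\alpha_{1}+\alpha_{2},\,\alpha_{1}+\alpha_{3}$ equals $\pm 1$, one may take each $\mathcal{C}_{i}$ to be a self-dual skew $\beta_{i}$-constacyclic code over $F_{q}$, and Theorems 5.1 and 3.3 then deliver a self-dual $\mathcal{C}$. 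The only genuine difficulty in the whole argument is the component fact $\beta^{2}=1$; the remaining work is the bookkeeping of the two decompositions and the enumeration of the eight square roots of unity in $\mathcal{R}$.
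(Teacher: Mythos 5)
Your overall route is the same as the paper's: reduce to the component codes via the decompositions of Theorem 3.3 and Theorem 5.1, and then invoke the fact that a self-dual skew $\beta$-constacyclic code over $F_{q}$ forces $\beta=\pm 1$. The difference is that the paper's proof consists of a single sentence asserting that component fact, whereas you actually prove it: the observation that $\tau_{\beta}(c)-\tau_{\beta^{-1}}(c)$ is supported on the first coordinate, followed by the dichotomy (a codeword with $c_{n-1}\neq 0$ forces $\mathcal{C}_{i}=F_{q}^{n}$, while $c_{n-1}=0$ for all codewords forces $\mathcal{C}_{i}=\{0\}$, both incompatible with self-duality) is correct and is the standard argument. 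On the necessity direction your write-up is therefore a genuine improvement in rigor over the paper, not a different method.

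Two caveats. First, your argument needs $\mathcal{C}_{i}^{\perp}$ to be skew $\beta^{-1}$-constacyclic, which the paper (Corollary 5.1, via Lemma 3.1 of Jitman et al.) guarantees only when $n$ is a multiple of the order $k$ of $\theta_{t}$; that hypothesis does not appear in the statement of the corollary, so you are quietly importing it. The paper's one-line proof has the same hidden dependence, so this is a defect of the statement rather than of your reasoning, but you should make the hypothesis explicit. Second, the literal converse (``if $\alpha$ is one of the eight listed units then $\mathcal{C}$ is self-dual'') is false for a fixed code --- for instance $\mathcal{C}=\mathcal{R}^{n}$ is skew cyclic, i.e.\ $\alpha=1$, but is not self-dual --- and your converse paragraph silently replaces it with an existence claim (``one may take each $\mathcal{C}_{i}$ to be a self-dual code''), which in addition presupposes that self-dual $\pm1$-constacyclic codes of length $n$ over $F_{q}$ exist at all (this already requires $n$ even). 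The paper's ``if and only if'' suffers from exactly the same problem, so you have matched its content; just be aware that what you establish in that direction is not what the corollary literally asserts.
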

\begin{proof}
The result follows from the fact that $\mathcal{C}$ is self-dual if and only if $\alpha_{1}=\pm 1, \alpha_{1}+\alpha_{2}=\pm 1, \alpha_{1}+\alpha_{3}=\pm 1.$
\end{proof}

\begin{thm}
Let $\mathcal{C}=(1-u-v)\mathcal{C}_{1}\oplus u\mathcal{C}_{2}\oplus v\mathcal{C}_{3}$ be a skew $(\alpha_{1}+u\alpha_{2}+v\alpha_{3})$-constacyclic code of length $n$ over $\mathcal{R}$. Then $\mathcal{C}$ has a generating polynomial $f(x)$ in $\mathcal{R}[x;\theta_{t}]$ such that f(x) is a right divisor of $x^{n}-(\alpha_{1}+u\alpha_{2}+v\alpha_{3}).$
\end{thm}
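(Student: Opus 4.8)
The plan is to reduce the statement to the corresponding fact over the field $F_q$ via the decomposition of Theorem 5.1, and then glue the three field-level generators back together. Writing $\alpha=\alpha_1+u\alpha_2+v\alpha_3$, Theorem 5.1 tells us that $\mathcal{C}_1,\mathcal{C}_2,\mathcal{C}_3$ are skew $\alpha_1$-, $(\alpha_1+\alpha_2)$-, and $(\alpha_1+\alpha_3)$-constacyclic codes of length $n$ over $F_q$; set $\beta_1=\alpha_1$, $\beta_2=\alpha_1+\alpha_2$, $\beta_3=\alpha_1+\alpha_3$, so that $\mathcal{C}_i$ is a left submodule of $F_q[x;\theta_t]/\langle x^n-\beta_i\rangle$. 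Since $F_q$ is a field and $\theta_t$ is an automorphism, $F_q[x;\theta_t]$ is a left Euclidean domain, hence a principal left ideal ring; consequently each $\mathcal{C}_i$ is generated by a monic polynomial $f_i(x)$ that is a right divisor of $x^n-\beta_i$, say $x^n-\beta_i=g_i(x)\ast f_i(x)$.

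The structural observation I would record first is that the orthogonal idempotents $e_0=1-u-v$, $e_1=u$, $e_2=v$ are \emph{central} in $\mathcal{R}[x;\theta_t]$. Indeed $\theta_t(u)=u$ and $\theta_t(v)=v$, so $x\ast e_i=\theta_t(e_i)x=e_ix$ for each $i$, and the $e_i$ clearly commute with every coefficient in $\mathcal{R}$. This yields the ring decomposition $\mathcal{R}[x;\theta_t]\cong e_0\mathcal{R}[x;\theta_t]\times e_1\mathcal{R}[x;\theta_t]\times e_2\mathcal{R}[x;\theta_t]$ with each factor isomorphic to $F_q[x;\theta_t]$, and it lets me move the $e_i$ freely across the skew product $\ast$.

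I would then set $f(x)=(1-u-v)f_1(x)+uf_2(x)+vf_3(x)$ and verify the two required properties. For the right-divisor property, put $g(x)=(1-u-v)g_1(x)+ug_2(x)+vg_3(x)$ and expand $g(x)\ast f(x)=\sum_{i,j}(e_ig_i)\ast(e_jf_j)$; centrality of the idempotents gives $(e_ig_i)\ast(e_jf_j)=e_ie_j\,(g_i\ast f_j)$, and the relations $e_ie_j=\delta_{ij}e_i$ collapse the double sum to $\sum_i e_i(g_i\ast f_i)=\sum_i e_i(x^n-\beta_i)$, which equals $x^n-\alpha$ by the identity $(1-u-v)\alpha_1+u(\alpha_1+\alpha_2)+v(\alpha_1+\alpha_3)=\alpha$ noted earlier. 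Hence $x^n-\alpha=g(x)\ast f(x)$, so $f(x)$ right-divides $x^n-\alpha$. For the generation property, the same bookkeeping shows that any left multiple $r\ast f$ with $r=e_0r_1+e_1r_2+e_2r_3$ satisfies $r\ast f=\sum_i e_i(r_i\ast f_i)$, whence $\mathcal{R}[x;\theta_t]\ast f(x)=e_0(F_q[x;\theta_t]\ast f_1)\oplus e_1(F_q[x;\theta_t]\ast f_2)\oplus e_2(F_q[x;\theta_t]\ast f_3)=\mathcal{C}$; conversely every element $(1-u-v)c_1+uc_2+vc_3$ of $\mathcal{C}$ with $c_i=r_i\ast f_i$ arises this way.

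The main obstacle is not computational but lies in pinning down the two structural inputs cleanly. The first is that each field-level component admits a generating right divisor, which rests on $F_q[x;\theta_t]$ being a principal left ideal ring under the left division algorithm. The second, and the genuinely delicate point in the skew setting, is the centrality of $e_0,e_1,e_2$ in $\mathcal{R}[x;\theta_t]$: it is precisely this centrality, guaranteed by $\theta_t$ fixing $u$ and $v$, that makes the skew products split along the three factors and allows the field factorizations $x^n-\beta_i=g_i\ast f_i$ to be reassembled into a single factorization of $x^n-\alpha$ over $\mathcal{R}$. Once these are in place, both verifications reduce to the orthogonality relations $e_ie_j=\delta_{ij}e_i$ and the idempotent identity for $\alpha$.
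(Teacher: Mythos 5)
Your proposal is correct and follows essentially the same route as the paper: both define $f(x)=(1-u-v)f_{1}(x)+uf_{2}(x)+vf_{3}(x)$ from generators of the component codes, prove $\mathcal{C}=\langle f(x)\rangle$ by multiplying by the idempotents, and exhibit the right-divisibility via $[(1-u-v)h_{1}(x)+uh_{2}(x)+vh_{3}(x)]\ast f(x)=x^{n}-\alpha$. The only difference is that you explicitly justify the centrality of $1-u-v$, $u$, $v$ in $\mathcal{R}[x;\theta_{t}]$ (since $\theta_{t}$ fixes $u$ and $v$), a point the paper's computation uses implicitly.
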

\begin{proof}
Let $f_{i}(x)$ be generator polynomial of $\mathcal{C}_{i}$ for $i=1,2,3$ in $F_{q}[x;\theta_{t}].$ Then $(1-u-v)f_{1}(x), uf_{2}(x)$ and $vf_{3}(x)$ are generators of $\mathcal{C}.$ Let $f(x)=(1-u-v)f_{1}(x)+uf_{2}(x)+vf_{3}(x)$ and $\mathcal{G}=\langle f(x) \rangle$. Clearly, $\mathcal{G}\subseteq \mathcal{C}.$ On the other hand, $(1-u-v)f(x)=(1-u-v)f_{1}(x)\in \mathcal{G}, uf(x)=uf_{2}(x)\in \mathcal{G}, vf(x)=vf_{3}(x)\in \mathcal{G}$. Therefore, $\mathcal{C}\subseteq \mathcal{G}$ and hence $\mathcal{C}=\mathcal{G}=\langle f(x) \rangle.$\\
Since $f_{1}(x), f_{2}(x)$ and $f_{3}(x)$ are right divisors of $x^{n}-\alpha_{1}, x^{n}-(\alpha_{1}+\alpha_{2})$ and $x^{n}-(\alpha_{1}+\alpha_{3})$ respectively, so there exist $h_{1}(x), h_{2}(x)$ and $h_{3}(x)$ such that $x^{n}-\alpha_{1}=h_{1}(x) \ast f_{1}(x); x^{n}-(\alpha_{1}+\alpha_{2})=h_{2}(x) \ast f_{2}(x); x^{n}-(\alpha_{1}+\alpha_{3})=h_{3}(x) \ast f_{3}(x).$ Now, $[(1-u-v)h_{1}(x)+uh_{2}(x)+vh_{3}(x)] \ast f(x)= (1-u-v)h_{1}(x) \ast f_{1}(x)+uh_{2}(x) \ast f_{2}(x)+vh_{3}(x) \ast f_{3}(x)= x^{n}-(\alpha_{1}+u\alpha_{2}+v\alpha_{3}).$ Therefore, $f(x)$ is a right divisor of $x^{n}-(\alpha_{1}+u\alpha_{2}+v\alpha_{3}).$
\end{proof}

\begin{cor}
Each left submodule of $\mathcal{R}[x;\theta_{t}]/\langle x^{n}-(\alpha_{1}+u\alpha_{2}+v\alpha_{3})\rangle$ is generated by one element where $\alpha_{1}+u\alpha_{2}+v\alpha_{3}$ is a unit in $\mathcal{R}$.
\end{cor}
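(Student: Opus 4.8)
The plan is to read this corollary as the module-theoretic restatement of Theorem~5.2, so that the proof amounts to assembling results already established. First I would invoke Theorem~4.1, which identifies left submodules of $\mathcal{R}_{n}=\frac{\mathcal{R}[x;\theta_{t}]}{\langle x^{n}-(\alpha_{1}+u\alpha_{2}+v\alpha_{3})\rangle}$ with skew $(\alpha_{1}+u\alpha_{2}+v\alpha_{3})$-constacyclic codes of length $n$ over $\mathcal{R}$. Hence, given an arbitrary left submodule $\mathcal{M}$ of $\mathcal{R}_{n}$, there is a corresponding skew constacyclic code $\mathcal{C}$, and it suffices to prove that $\mathcal{C}$ admits a single generating polynomial.

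Next I would apply the decomposition $\mathcal{C}=(1-u-v)\mathcal{C}_{1}\oplus u\mathcal{C}_{2}\oplus v\mathcal{C}_{3}$ supplied by Theorem~5.1, noting that since $\alpha_{1}+u\alpha_{2}+v\alpha_{3}$ is a unit, Lemma~5.1 guarantees $\alpha_{1}$, $\alpha_{1}+\alpha_{2}$, $\alpha_{1}+\alpha_{3}$ are units in $F_{q}$, so each $\mathcal{C}_{i}$ is a genuine skew constacyclic code over $F_{q}$. The key input is that each component $\mathcal{C}_{i}$ is itself singly generated: this holds because $F_{q}[x;\theta_{t}]$, having field coefficients, is a left Euclidean domain and hence a principal left ideal ring, so every left submodule of $F_{q}[x;\theta_{t}]/\langle x^{n}-\beta_{i}\rangle$ is generated by a single right divisor $f_{i}(x)$ of the modulus $x^{n}-\beta_{i}$.

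Finally, exactly as in the proof of Theorem~5.2, I would set $f(x)=(1-u-v)f_{1}(x)+uf_{2}(x)+vf_{3}(x)$ and verify the two inclusions $\langle f(x)\rangle\subseteq\mathcal{C}$ and $\mathcal{C}\subseteq\langle f(x)\rangle$. The forward inclusion is immediate; the reverse uses the orthogonal idempotents $1-u-v$, $u$, $v$ (whose pairwise products vanish and which sum to $1$) to recover each $(1-u-v)f_{1}(x)$, $uf_{2}(x)$, $vf_{3}(x)$ from $f(x)$, showing each generator of $\mathcal{C}$ lies in $\langle f(x)\rangle$. This exhibits $\mathcal{C}$, and therefore $\mathcal{M}$, as generated by the one element $f(x)$.

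The hard part is not computational but conceptual: securing that each constituent $\mathcal{C}_{i}$ over the field $F_{q}$ is principal. Over a field the skew polynomial ring enjoys a division algorithm, so this is automatic; the analogous statement would fail over a general non-chain coefficient ring, which is precisely why reducing to the three field-components via the idempotent decomposition is the correct strategy. Once principality of the components is in hand, the idempotent recombination is routine and the corollary follows directly from Theorem~5.2.
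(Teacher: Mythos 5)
Your argument is correct and follows the route the paper intends: the corollary is just Theorem~5.2 read through the correspondence of Theorem~4.1 between skew $(\alpha_{1}+u\alpha_{2}+v\alpha_{3})$-constacyclic codes and left submodules of $\mathcal{R}[x;\theta_{t}]/\langle x^{n}-(\alpha_{1}+u\alpha_{2}+v\alpha_{3})\rangle$, and your idempotent recombination of $f(x)=(1-u-v)f_{1}(x)+uf_{2}(x)+vf_{3}(x)$ is exactly the paper's proof of that theorem. The only thing you add is an explicit justification, via the left Euclidean property of $F_{q}[x;\theta_{t}]$, that each component $\mathcal{C}_{i}$ is principal --- a fact the paper silently assumes when it writes ``let $f_{i}(x)$ be the generator polynomial of $\mathcal{C}_{i}$'' --- so your write-up is, if anything, slightly more complete than the source.
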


\begin{thm}
$\mathcal{C}=(1-u-v)\mathcal{C}_{1}\oplus u\mathcal{C}_{2}\oplus v\mathcal{C}_{3}$ be a skew $(\alpha_{1}+u\alpha_{2}+v\alpha_{3})$-constacyclic code of length $n$ over $\mathcal{R}$ and $gcd(n,k)=1, gcd(n,q)=1.$ Then there exist an idempotent generator $e(x)= (1-u-v)e_{1}(x)+ue_{2}(x)+ve_{3}(x)$ in $\mathcal{R}[x;\theta_{t}]/\langle x^{n}-(\alpha_{1}+u\alpha_{2}+v\alpha_{3}) \rangle$ where $e_{1}(x), e_{2}(x), e_{3}(x)$ are idempotent generators of $\mathcal{C}_{1}, \mathcal{C}_{2}$ and $\mathcal{C}_{3}$ respectively.
\end{thm}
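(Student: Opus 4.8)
The plan is to construct the idempotent generator of $\mathcal{C}$ out of idempotent generators of its three components and to verify everything through the orthogonal idempotents $1-u-v,\,u,\,v$ of $\mathcal{R}$. Since $\alpha=\alpha_{1}+u\alpha_{2}+v\alpha_{3}$ is a unit, Lemma 5.1 gives that $\beta_{1}=\alpha_{1}$, $\beta_{2}=\alpha_{1}+\alpha_{2}$ and $\beta_{3}=\alpha_{1}+\alpha_{3}$ are units (in particular nonzero) in $F_{q}$, and Theorem 5.1 shows that $\mathcal{C}_{1},\mathcal{C}_{2},\mathcal{C}_{3}$ are skew $\beta_{1}$-, $\beta_{2}$-, $\beta_{3}$-constacyclic codes of length $n$ over $F_{q}$. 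The three steps I would carry out are: (i) produce an idempotent generator $e_{i}(x)$ of each $\mathcal{C}_{i}$; (ii) set $e(x)=(1-u-v)e_{1}(x)+ue_{2}(x)+ve_{3}(x)$; (iii) check that $e(x)$ is idempotent and that it generates $\mathcal{C}$.

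For step (i) both hypotheses are used. Because $\gcd(n,k)=1$, the argument of Theorem 4.3 (with Corollary 4.2) applied to codes over $F_{q}$ shows that each $\mathcal{C}_{i}$ is in fact an ordinary $\beta_{i}$-constacyclic code, i.e. an ideal of the commutative quotient $F_{q}[x]/\langle x^{n}-\beta_{i}\rangle$; this lets me replace the skew structure by the ordinary one on each component. Because $\gcd(n,q)=1$ we have $p\nmid n$, so the derivative $nx^{n-1}$ of $x^{n}-\beta_{i}$ is nonzero, and since $\beta_{i}\neq 0$ it is coprime to $x^{n}-\beta_{i}$; hence $x^{n}-\beta_{i}$ is separable. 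By the Chinese Remainder Theorem $F_{q}[x]/\langle x^{n}-\beta_{i}\rangle$ is then a direct product of fields, i.e. a semisimple ring, so every ideal is a direct summand and is generated by a unique idempotent $e_{i}(x)$ with $e_{i}(x)^{2}=e_{i}(x)$.

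For steps (ii)--(iii) the decisive facts are that $\eta_{1}=1-u-v,\ \eta_{2}=u,\ \eta_{3}=v$ are pairwise orthogonal idempotents summing to $1$ which are fixed by $\theta_{t}$, hence central in $\mathcal{R}[x;\theta_{t}]$, and that they satisfy the compatibility identities $\eta_{1}(x^{n}-\alpha)=\eta_{1}(x^{n}-\beta_{1})$, $\eta_{2}(x^{n}-\alpha)=\eta_{2}(x^{n}-\beta_{2})$, $\eta_{3}(x^{n}-\alpha)=\eta_{3}(x^{n}-\beta_{3})$, which follow at once from $u^{2}=u,\ v^{2}=v,\ uv=vu=0$. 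Centrality lets me pull each $\eta_{j}$ to the front, so that $e(x)^{2}=\sum_{i,j}\eta_{i}\eta_{j}\,e_{i}(x)e_{j}(x)=\sum_{i}\eta_{i}\,e_{i}(x)^{2}$, the cross terms vanishing by orthogonality. Multiplying the relation $e_{i}(x)^{2}\equiv e_{i}(x)\pmod{x^{n}-\beta_{i}}$ by $\eta_{i}$ and using the compatibility above turns it into $\eta_{i}e_{i}(x)^{2}\equiv\eta_{i}e_{i}(x)\pmod{x^{n}-\alpha}$; summing over $i$ yields $e(x)^{2}=e(x)$ in $\mathcal{R}[x;\theta_{t}]/\langle x^{n}-\alpha\rangle$. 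Finally $\eta_{1}e(x)=(1-u-v)e_{1}(x)$ generates $(1-u-v)\mathcal{C}_{1}$, and similarly $ue(x)=ue_{2}(x)$ and $ve(x)=ve_{3}(x)$ generate $u\mathcal{C}_{2}$ and $v\mathcal{C}_{3}$, so $\mathcal{C}\subseteq\langle e(x)\rangle$; since $e(x)\in\mathcal{C}$ the reverse inclusion is clear, and $\langle e(x)\rangle=\mathcal{C}$.

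The main obstacle I anticipate is reconciling the skew and commutative pictures. One must first justify, via $\gcd(n,k)=1$, that each $\mathcal{C}_{i}$ is genuinely an ordinary constacyclic code, so that the semisimplicity argument applies and the ordinary product entering $e_{i}(x)^{2}$ is the correct one; and one must then ensure that the idempotency assembled over $F_{q}$ survives reduction modulo $x^{n}-\alpha$ rather than the separate moduli $x^{n}-\beta_{i}$. This is exactly what the centrality of $1-u-v,\,u,\,v$ (they lie in the invariant subring $F_{p^{t}}+uF_{p^{t}}+vF_{p^{t}}$) and the identities $\eta_{i}(x^{n}-\alpha)=\eta_{i}(x^{n}-\beta_{i})$ are there to supply; once these two points are nailed down, the remaining verifications are routine bookkeeping.
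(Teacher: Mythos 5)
Your proposal is correct, and it is considerably more complete than the paper's own proof, which consists of a single sentence: take the $e_{i}(x)$ as given (they appear as a hypothesis in the statement) and invoke Theorem 5.2 to conclude that $e(x)=(1-u-v)e_{1}(x)+ue_{2}(x)+ve_{3}(x)$ is an idempotent generator. In particular the paper never actually uses the hypotheses $\gcd(n,k)=1$ and $\gcd(n,q)=1$, nor does it verify idempotency of $e(x)$. Your step (i) supplies exactly what those hypotheses are for: $\gcd(n,k)=1$ reduces each skew $\beta_{i}$-constacyclic component to an ordinary $\beta_{i}$-constacyclic code (the Theorem 4.3 argument goes through for an arbitrary unit $\beta_{i}$ by multiplying by $\beta_{i}^{-D}$, so the $\alpha^{2}=1$ restriction of Section 4 is not needed), and $\gcd(n,q)=1$ gives separability of $x^{n}-\beta_{i}$, hence semisimplicity of $F_{q}[x]/\langle x^{n}-\beta_{i}\rangle$ and existence of the idempotent generators $e_{i}(x)$. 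Your steps (ii)--(iii) then reprove the relevant half of Theorem 5.2 and add the idempotency check via the orthogonal central idempotents $1-u-v$, $u$, $v$ and the identities $\eta_{i}(x^{n}-\alpha)=\eta_{i}(x^{n}-\beta_{i})$. The one point to state explicitly rather than leave implicit: since $\gcd(n,k)=1$ forces $k\nmid n$ whenever $k>1$, the quotient $\mathcal{R}[x;\theta_{t}]/\langle x^{n}-\alpha\rangle$ is not a ring under the skew product, so ``$e(x)^{2}=e(x)$'' must be read in the commutative quotient $\mathcal{R}[x]/\langle x^{n}-\alpha\rangle$ (legitimate precisely because, by Theorem 4.3, $\mathcal{C}$ and its components are ordinary constacyclic codes under this hypothesis); you flag this reconciliation yourself at the end, and with that reading every step of your argument is sound.
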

\begin{proof}
Let $e_{1}(x), e_{2}(x), e_{3}(x)$ are idempotent generators of $\mathcal{C}_{1}, \mathcal{C}_{2}$ and $\mathcal{C}_{3}$ respectively. Then by Theorem 5.2, we conclude that $e(x)= (1-u-v)e_{1}(x)+ue_{2}(x)+ve_{3}(x)$ is an idempotent generator of $\mathcal{C}$ in $\mathcal{R}[x;\theta_{t}]/\langle x^{n}-(\alpha_{1}+u\alpha_{2}+v\alpha_{3}) \rangle$.
\end{proof}

\begin{exam}
Consider the field $F_{9}=F_{3}[\alpha]$ where $\alpha^{2}+1=0$. Take the Frobineous automorphism $\theta_{t}:F_{9}\rightarrow F_{9}$ define by $\theta_{t}(\alpha)=\alpha^{3}$ and $\mathcal{R}= F_{9}+uF_{9}+vF_{9}$ where $u^{2}=u, v^{2}=v, uv=vu=0.$ The polynomial $f(x) = x^{6}+(1-2u-2v)x^{5}+x^{4}+(1-2u-2v)x^{3}+x^{2}+(1-2u-2v)x+1$ is a right divisor of $x^{7}-(1-2u-2v)$ in $\mathcal{R}[x;\theta_{t}]$ and $gcd(n , \mid\langle \theta_{t} \rangle\mid ) = 1$. Thus by Theorem 4.3, $\mathcal{C}=\langle f(x) \rangle$ is a $(1-2u-2v)$-constacyclic cyclic code of length 7 over $\mathcal{R}$ and also $\Phi(\mathcal{C})$ is a $[21,3,7]$ linear code over $F_{9}$.
\end{exam}

\begin{exam}
Consider the field $F_{25}=F_{5}[\alpha]$ where $\alpha^{2}+\alpha+1=0$. Take the Frobineous automorphism $\theta_{t}:F_{25}\rightarrow F_{25}$ define by $\theta_{t}(\alpha)=\alpha^{5}.$ Now, $x^{6}-1=(x^{2}-1)(x^{2}+x+1)(x^{2}-x+1)$; $x^{6}+1=(x^{2}-4)(x^{4}+4x^{2}+1).$ Let $f_{1}(x)=(x^{4}+4x^{2}+1), f_{2}(x)=(x^{2}+x+1), f_{3}(x)=(x^{2}-x+1)$ and  $f(x)=(1-u-v)f_{1}(x)+uf_{2}(x)+vf_{3}(x)=(1-u-v)x^{4}+(4-3u-3v)x^{2}+(u-v)x+1$, then $\mathcal{C}=\langle f(x) \rangle$ is a skew $(-1+2u+2v)$-constacyclic cyclic code of length 6 over $\mathcal{R}= F_{25}+uF_{25}+vF_{25}$ where $u^{2}=u, v^{2}=v, uv=vu=0.$ Since $gcd(n , \mid\langle \theta_{t} \rangle\mid ) = 2$, then by Theorem 4.4, $\mathcal{C}$ is a $(-1+2u+2v)$-quasi-twisted code of index 2 over $\mathcal{R}$. Also $\Phi(\mathcal{C})$ is a $[18,10,3]$ linear code over $F_{25}$.
\end{exam}
\section{Conclusion}
The skew $\alpha$-constacyclic codes of length $n$ over $\mathcal{R} = F_{p^{m}}+uF_{p^{m}}+vF_{p^{m}}$ is equivalent to $\alpha$-constacyclic codes  over $\mathcal{R}$ (for $gcd(n ,k) = 1$) or equivalent to $\alpha$-quasi-twisted code of index $l$ (for $gcd(n ,k) = l$). It is shown that skew constacyclic codes over $\mathcal{R}$ are principally generated. Also, we have given the necessary and sufficient conditions for codes being self-dual over $\mathcal{R}$.
\\The works what we have done in this paper can be generalized over the finite ring $F_{p^{m}}+u_{1}F_{p^{m}}+u_{2}F_{p^{m}}+\dots +u_{s}F_{p^{m}}, u_{i}^{2} = u_{i}, u_{i}u_{j} = u_{j}u_{i} = 0, 1\leq i, j \leq s$, $p$ is prime and $s\geq 1.$

\section*{Acknowledgement}
The authors are thankful to University Grant Commission(UGC), Govt. of India for financial support under Ref. No. 20/12/2015(ii)EU-V dated 31/08/2016 and Indian Institute of Technology Patna for providing the research facilities. The authors would like to thank the anonymous referees for their useful comments and suggestions.

\end{document}